\theoremstyle{plain}
\newtheorem{thm}{\protect\theoremname}
\theoremstyle{plain}
\newtheorem{lem}[thm]{\protect\lemmaname}
\theoremstyle{plain}
\newtheorem{cor}[thm]{\protect\corollaryname}
\newtheorem{question}{Question}
\renewcommand{\varepsilon}{\epsilon}
\renewcommand{\lg}{\log}
\newcommand{\poly}{\operatorname{poly}}
\newcommand{\friends}{\textsf{neighbors}}
\newcommand{\link}{\textsf{link}}
\newcommand{\cut}{\textsf{cut}}
\newcommand{\markOP}{\textsf{mark}}
\newcommand{\unmarkOP}{\textsf{unmark}}
\newcommand{\nearestMarkedNeighbor}{\textsf{nearestMarkedNeighbor}}
\newcommand{\query}{\textsf{query}}
\newcommand{\D}{\mathcal{D}}
\providecommand{\corollaryname}{Corollary}
\providecommand{\lemmaname}{Lemma}
\providecommand{\theoremname}{Theorem}
\global\long\def\N{\mathbb{N}}%
\title{Explicit and Implicit Dynamic Coloring of Graphs with Bounded Arboricity}
\author{Monika Henzinger}{Faculty of Computer Science, University of Vienna, Vienna, Austria}{monika.henzinger@univie.ac.at}{https://orcid.org/0000-0002-5008-6530}{The research leading to these results has received funding from the European Research Council under the European Community's Seventh Framework Programme (FP7/2007-2013) / ERC grant agreement No.~340506.}
\author{Stefan Neumann}{Faculty of Computer Science, University of Vienna, Vienna, Austria}{stefan.neumann@univie.ac.at}{}{Stefan Neumann gratefully acknowledges the financial support from the Doctoral Programme ``Vienna Graduate School on Computational Optimization'' which is funded by the Austrian Science Fund (FWF, project no.~W1260-N35). The research leading to these results has received funding from the European Research Council under the European Community's Seventh Framework Programme (FP7/2007-2013) / ERC grant agreement No.~340506.}
\author{Andreas Wiese}{Department of Industrial Engineering, Universidad de Chile, Santiago, Chile}{awiese@dii.uchile.cl}{}
{Andreas Wiese was supported by the grant Fondecyt Regular 1170223.}
\authorrunning{M.~Henzinger, S.~Neumann and A.~Wiese}
\keywords{dynamic algorithms, vertex coloring, arboricity}
\begin{document}

\maketitle

\begin{abstract}
	Graph coloring is a fundamental problem in computer science. We study the
	fully dynamic version of the problem in which the graph is undergoing edge
	insertions and deletions and we wish to maintain a vertex-coloring with
	small update time after each insertion and deletion.

 	We show how to maintain an $O(\alpha \lg n)$-coloring with polylogarithmic
 	update time, where $n$ is the number of vertices in the graph and $\alpha$
 	is the \emph{current} arboricity of the graph.  This improves upon a result
	by Solomon and Wein~(ESA'18) who maintained an $O(\alpha_{\max}\lg^2 n)$-coloring,
 	where $\alpha_{\max}$ is the maximum arboricity of the graph over all updates.
 
 	Furthermore, motivated by a lower bound by Barba et al.~(Algorithmica'19), we initiate
	the study of \emph{implicit} dynamic colorings.  Barba et al.\ showed that
	dynamic algorithms with polylogarithmic update time
	cannot maintain an $f(\alpha)$-coloring for any function $f$ when the vertex
	colors are stored \emph{explicitly}, i.e.,
 	for each vertex the color is stored explicitly in the memory.
 	Previously, all dynamic algorithms maintained explicit colorings.
	Therefore, we propose to study \emph{implicit} colorings, i.e., the data
	structure only needs to offer an efficient query procedure to return the color of a
	vertex (instead of storing its color explicitly). 
	We provide an algorithm which breaks the lower bound and maintains an
	implicit $2^{O(\alpha)}$-coloring with polylogarithmic update time.
	In particular, this yields the first dynamic $O(1)$-coloring for graphs with
	constant arboricity such as planar graphs or graphs with bounded tree-width,
	which is impossible using explicit colorings.
 
 	To obtain our implicit coloring result we show how to dynamically maintain a
 	partition of the graph's edges into $O(\alpha)$ forests with polylogarithmic
 	update time. We believe this data structure is of independent interest and
 	might have more applications in the future.
\end{abstract}

\thispagestyle{empty}
\newpage{}
\setcounter{page}{1}

\section{Introduction}
	Graph coloring is one of the most fundamental and well-studied problems in
	computer science.  Given a graph $G=(V,E)$ with $n$ vertices, a
	\emph{$C$-coloring} assigns a color from $\{1,\dots,C\}$ to each vertex. The
	coloring is \emph{proper} if all adjacent vertices have different colors.
	The smallest $C$ for which there exists a proper $C$-coloring is called the
	\emph{chromatic number} of $G$.  Unfortunately, it is \textsf{NP}-hard to
	approximate the chromatic number within a factor of $n^{1-\varepsilon}$ for
	all $\varepsilon>0$~\cite{khot06better,zuckerman09linear}. Hence, graph
	coloring is usually studied w.r.t.\ certain graph parameters such as the
	\emph{maximum degree}~$\Delta$ of any vertex or the
	\emph{arboricity}~$\alpha$, which is the minimum number of forests
	into which the edges of $G$ can be partitioned. It is well-known that proper
	$(\Delta+1)$-colorings and proper $O(\alpha)$-colorings can be computed in
	polynomial time.
   
	In the dynamic version of the problem, the graph is undergoing
	edge insertions and deletions and a data structure needs to maintain a
	proper coloring with small update time. More concretely, suppose
	there are $m$ update operations each inserting or deleting a single
	edge. This implies an sequence of graphs $G_0,G_1,\dots,G_m$ such that $G_i$
	and $G_{i+1}$ differ by exactly one edge. Then for each $G_i$ the dynamic
	algorithm must maintain a proper coloring.

	When studying dynamic algorithms w.r.t.\ graph parameters such as the
	maximum degree $\Delta$ or the arboricity $\alpha$ it is important that the
	dynamic algorithms are \emph{adaptive} to the parameter. That is, during a
	sequence of edge insertions and deletions, the values of parameters such as
	$\Delta$ and $\alpha$ might change over time. For example, suppose
	$\alpha(G_i)$ is the arboricity of $G_i$ and let $\alpha_{\max}=\max_i
	\alpha(G_i)$ denote the maximum arboricity of all graphs.
	Then ideally we would like that after the $i$'th update the number of colors
	used by a dynamic algorithm depends on $\alpha(G_i)$ and and not on
	$\alpha_{\max}$ because it might be that $\alpha(G_i) \ll \alpha_{\max}$. 

	Bhattacharya et al.~\cite{bhattacharya18dynamic} studied
	the dynamic coloring problem and showed how to maintain a
	$(\Delta+1)$-coloring with polylogarithmic update time and their algorithm
	is adaptive to the current maximum degree of the graph.  In follow-up
	work~\cite{bhattacharya19fully,henzinger19constant} the update time was
	improved to $O(1)$.

	Later, Solomon and Wein~\cite{solomon18improved} provided a dynamic
	$O(\alpha_{\max} \lg^2 n)$-coloring algorithm with $\poly(\lg\lg n)$ update
	time. Note that the number of colors used by~\cite{solomon18improved}
	depends on maximum arboricity $\alpha_{\max}$ over all graphs $G_i$. Hence,
	we ask the following question.
	\begin{question}
	\label{q:adaptive}
		Are there dynamic coloring algorithms with polylogarithmic update time
		which maintain a coloring that is \emph{adaptive} to the current
		arboricity of the graph?
	\end{question}

	Another interesting question concerns limitations of dynamic coloring
	algorithms.  A lower bound of Barba et al.~\cite{barba19dynamic} shows that
	there exist dynamic graphs which are $2$-colorable but any dynamic algorithm
	maintaining a $c$-coloring must recolor
	$\Omega\left(n^{\frac{2}{c(c-1)}}\right)$ vertices after each update.
	The lower bound holds even for forests, i.e., for graphs with arboricity
	$\alpha=1$.
	This implies that any dynamic algorithm maintaining an $f(\alpha)$-coloring
	for any function $f$ must recolor $n^{\Omega(1)}$ vertices after each
	update. Note that this rules out dynamic
	$O(1)$-colorings for forests and, more generally, planar graphs with
	polylogarithmic update times. 
	
	However, the lower bound only applies to dynamic algorithm that are
	maintaining \emph{explicit colorings}. That is, a coloring is
	\emph{explicit} if after each update the data structure stores an array
	$\mathcal{C}$ of length $n$ such that $\mathcal{C}[u]$ stores the color of
	vertex $u$.  Thus, the color of each vertex can be determined with a single
	memory access.  All of the previously mentioned
	dynamic coloring algorithms maintain explicit colorings but are allowed to
	use more than a constant number of colors.

	In the light of the above lower bound, it is natural to ask whether it can
	be bypassed by \emph{implicit colorings}. That is, a coloring is
	\emph{implicit} if the data structure offers a query routine $\query(v)$ which after
	some computation returns the color of a vertex $v$.
	In particular, we require the following consistency requirement for the
	query operation:
	\begin{itemize}
		\item Consider any sequence of consecutive query operations
		$\query(v_1),\dots,\query(v_k)$ which are not interrupted by an update.
		Then if vertices $v_i$ and $v_j$, $i\neq j$, are adjacent, we have that
		$\query(v_i) \neq \query(v_j)$.
	\end{itemize}
	Note that in the above definition we only consider consecutive query
	operations which are \emph{not interrupted by an update}. This is because
	after an update potentially a lot of vertex colors may change (due to the
	lower bound). Furthermore,
	observe that the definition implies that if we query all vertices of the
	graph consecutively, then we obtain a proper coloring.

	Observe that an explicit coloring always implies an implicit coloring: when
	queried for a vertex $u$, the data structure simply returns
	$\mathcal{C}[u]$. However, implicit colorings are much more versatile than
	explicit colorings: when the colors of many vertices change, this does not
	affect the implicit coloring because it does not have to update the array
	$\mathcal{C}$.  Hence, we ask the following natural question.

	\begin{question}
	\label{q:implicit} Can we break the lower bound of Barba et
	al.~\cite{barba19dynamic} with
		algorithms maintaining implicit colorings?
	\end{question}

	\subsection{Our Contributions}
	We answer both questions affirmatively.

	\textbf{Adaptive explicit colorings.}
	First, we show that there exists a randomized\footnote{
		As usual in the study of randomized dynamic algorithms we assume that the
		adversary is oblivious, i.e., that the sequence of edge insertions and
		deletions is fixed before the algorithm runs.
	}
	algorithm which maintains an explicit and \emph{adaptive} $O(\alpha \lg n)$-coloring with
	polylogarithmic update time.  This answers Question~\ref{q:adaptive}
	affirmatively.
	\begin{thm}
	\label{thm:explicit}
		There is a randomized data structure that maintains an explicit and \emph{adaptive}
		$O(\alpha\log n)$-coloring on a graph with $n$ vertices and arboricity
		$\alpha$ with expected amortized update time $O(\log^2 n)$.
	\end{thm}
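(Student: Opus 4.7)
The plan is to combine a dynamic data structure whose out-degree bound adapts to the current arboricity with a randomized palette-based coloring scheme on top, and to use geometric doubling to obtain adaptivity to the current $\alpha$ rather than to $\alpha_{\max}$.

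First, I will maintain an orientation of the edges of $G_i$ whose maximum out-degree is $O(\alpha(G_i))$ and that supports polylogarithmic-time edge updates. This can be obtained as a black box from known dynamic orientation data structures (for example a Brodal--Fagerberg-style scheme). To make the guarantee adaptive, I will run $O(\lg n)$ copies in parallel, parameterized by geometric guesses $\hat{\alpha}=2^i$ of the arboricity, and always commit to the smallest valid guess. Whenever the arboricity forces the guess to double or halve, I rebuild the entire coloring; the rebuild cost is amortized against the $\Omega(\hat{\alpha})$ updates needed to move the arboricity by a constant factor, preserving the polylogarithmic amortized bound.

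Second, on top of this adaptive orientation I maintain the explicit coloring using a palette of size $C=\Theta(\hat{\alpha}\lg n)$. For each vertex $v$ I draw a random sequence $\pi_v$ of colors independently and uniformly from $[C]$, and set $\mathcal{C}[v]$ to the first entry of $\pi_v$ that does not coincide with the current color of any out-neighbor of $v$. Because every edge is covered by the orientation, say $u\to v$, the endpoint responsible for avoiding the other's color guarantees $\mathcal{C}[u]\neq\mathcal{C}[v]$, so the assignment is always proper. A union bound over the $O(\hat{\alpha})$ out-neighbors shows that the very first palette entry is already valid with probability $1-O(1/\lg n)$, so a color is typically determined in $O(1)$ expected probes.

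Third, I analyze the cost of an update. An edge insertion or deletion triggers $O(1)$ orientation flips, each of which affects the out-neighborhood of $O(1)$ vertices. Each affected vertex is recolored by advancing through its $\pi_v$; to keep this cheap I maintain, for every vertex, an auxiliary structure recording how many out-neighbors currently carry each color, so candidate colors can be tested in $O(1)$ time. Recoloring $v$ may propagate to those in-neighbors $u$ of $v$ whose current palette entry matches $v$'s new color. The key lemma is a geometric bound on cascades: conditional on $v$ changing color, a fixed in-neighbor $u$ must re-choose with probability $O(1/C)$, and summed over the $O(\hat{\alpha})$ in-neighbors this contributes a contraction factor of $O(\hat{\alpha}/C)=O(1/\lg n)$ per cascade level, so the expected cascade length is $O(1)$. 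Combining this with the $O(\lg^2 n)$ amortized maintenance cost of the $O(\lg n)$ parallel copies of the orientation data structure, and with the $O(1)$ per-probe test enabled by the auxiliary bookkeeping, yields the claimed $O(\lg^2 n)$ expected amortized update time while using $O(\alpha\lg n)$ colors.

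The principal obstacle will be making the cascade argument rigorous under the oblivious adversary: I must ensure that the random entries of $\pi_v$ consulted during a cascade are independent of the random coins that initiated it, so that the per-level $O(1/\lg n)$ estimate actually composes into a convergent geometric series. A secondary obstacle is the auxiliary bookkeeping needed to test candidate colors in $O(1)$ amortized time, since a naive scan of the out-neighbors per probe would cost $\Theta(\alpha)$ and would break the desired $O(\lg^2 n)$ bound whenever $\alpha=\omega(\lg^2 n)$.
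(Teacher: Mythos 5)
Your high-level plan (adaptive low-outdegree orientation plus a random palette of size $\Theta(\alpha\lg n)$) differs from the paper's, and both of its load-bearing steps have genuine gaps. The fatal one is the cascade bound. A $D$-outdegree orientation says nothing about \emph{in}-degrees: orient a star toward its center and you get arboricity $1$ but in-degree $n-1$. In your scheme the vertex responsible for the edge $u\to v$ is $u$, so when $v$ re-chooses its color the vertices that may now conflict are exactly the in-neighbors of $v$, of which there can be $\Theta(n)$, not $O(\hat{\alpha})$. The expected number of in-neighbors hit is $d_{\mathrm{in}}(v)/C$, which can be $\Theta(n/(\alpha\lg n))$; your ``contraction factor'' per cascade level is therefore polynomially large and the geometric series does not converge (even detecting the conflicts costs $\Theta(d_{\mathrm{in}}(v))$ without extra per-color in-neighbor indices). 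Swapping roles so that each vertex avoids its in-neighbors fixes the propagation set but blows up the constraint set to $\Theta(n)$, so a palette of size $O(\alpha\lg n)$ no longer suffices. The second gap is the adaptivity mechanism: it is false that $\Omega(\hat{\alpha})$ updates are needed to change the arboricity by a constant factor. Adding one edge to a spanning tree raises the arboricity from $1$ to $2$ and deleting it reverses this, so an adversary oscillating across your threshold forces a full $\Theta(n)$-cost recoloring every $O(1)$ updates, destroying the amortized bound; hysteresis does not repair this because the rebuild cost scales with $m$ while the work needed to move the arboricity scales only with $\mathrm{poly}(\alpha)$.

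The paper sidesteps both problems with a level hierarchy rather than a bare orientation. Vertices live on $O(\lg^2 n)$ levels grouped into blocks $G_\ell$ of $O(\lg n)$ levels; the invariants guarantee that a vertex on a level in $G_\ell$ has at most $O(2^\ell)$ neighbors at its own level \emph{or higher}, and each level gets its own disjoint palette of $(K+\varepsilon)2^\ell$ colors. Because palettes of distinct levels are disjoint, only same-level neighbors matter, and a recolored vertex avoids \emph{all} of them, so no cascade is ever triggered; randomness is used only to argue that a freshly inserted same-level edge collides with probability $O(2^{-\ell})$, giving $O(1)$ amortized recoloring cost. Adaptivity is free: the data structure guarantees that all groups with $\ell>\lceil\lg(4\alpha)\rceil$ are empty for the \emph{current} arboricity $\alpha$, so a geometric sum over groups bounds the palette size by $O(\alpha\lg n)$ with no global rebuilds. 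If you want to pursue your route, you would need some analogue of this locality, i.e., a structure in which both the constraint set and the propagation set of a recoloring are simultaneously bounded by roughly the palette size.
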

	Note that this improves upon the results in~\cite{solomon18improved} in two
	ways: It makes the coloring adaptive and it shaves a $\lg n$-factor in the
	number of colors used by the algorithm.
	To obtain our result, we use a similar approach as the one used
	in~\cite{solomon18improved}.  In~\cite{solomon18improved}, the vertices were
	assigned to $O(\lg n)$ levels and the vertices on each level were
	colored using $O(\alpha_{\max} \lg n)$ colors.  In our result, we assign the
	vertices to $O(\lg^2 n)$ levels and partition the levels into \emph{groups}
	of $O(\lg n)$ consecutive levels each.  We then make sure that for coloring
	the $\ell$'th group we use only $O(2^\ell \lg n)$ colors and that the levels
	of groups with $\ell > \Omega(\lg \alpha)$ are empty.  Then a geometric sum
	argument implies that we use $O(\alpha\lg n)$ colors in total.

	\textbf{Adaptive implicit colorings.}
	Furthermore, we provide two algorithms maintaining implicit colorings. Both
	of these algorithms are also adaptive.
	We first provide an algorithm which maintains an adaptive implicit
	$2^{O(\alpha)}$-coloring with polylogarithmic update time and query time
	$O(\alpha \lg n)$. This improves upon the coloring of
	Theorem~\ref{thm:explicit} for $\alpha=o(\lg\lg n)$.
	\begin{thm}
	\label{thm:implicit1}
		There is a deterministic data structure that maintains an adaptive implicit
		$2^{O(\alpha)}$-coloring, with update time $O(\lg^3 n)$ and query time
		$O(\alpha \lg n)$, where $\alpha$ is the current arboricity of the
		graph.
	\end{thm}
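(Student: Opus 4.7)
The plan is to reduce the implicit coloring problem to (a) the dynamic $O(\alpha)$-forest decomposition advertised in the abstract and (b) classical link-cut trees. Assume we maintain forests $F_1,\dots,F_k$ with $k=O(\alpha)$ whose edge sets partition $E(G)$, and that each tree component is rooted arbitrarily. I would declare the implicit color of a vertex $v$ to be the bit-vector $c(v) = (c_1(v),\dots,c_k(v)) \in \{0,1\}^k$, where $c_i(v)$ is the parity of the depth of $v$ in its component of $F_i$. This uses at most $2^k = 2^{O(\alpha)}$ distinct colors, as required. Properness is immediate: every edge $(u,v) \in E$ lies in exactly one forest $F_i$, in which $u$ and $v$ sit at consecutive depths; hence $c_i(u) \neq c_i(v)$, and therefore $c(u) \neq c(v)$.

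To implement this I would back each forest $F_i$ with a Sleator--Tarjan link-cut tree supporting $\link$, $\cut$, and a depth-in-represented-tree query in $O(\log n)$ amortized time --- a textbook augmentation, since depth can be read off as a path sum in the preferred-path splay representation. The routine $\query(v)$ simply performs $k$ depth queries and returns their parities concatenated, at a total cost of $O(\alpha \log n)$. On an edge update I would forward the change to the decomposition data structure, which by hypothesis runs in polylogarithmic time and produces polylogarithmically many structural changes to the $F_i$'s; each of these triggers a constant number of $\link$/$\cut$ calls at $O(\log n)$ amortized cost, giving the claimed $O(\log^3 n)$ update time.

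The consistency requirement for $\query$ comes essentially for free: between two updates neither the decomposition nor the link-cut trees change, so each $c_i(v)$ is a well-defined function of $v$, and any burst of queries is internally consistent with one proper coloring. Adaptivity is inherited from the decomposition: when the current arboricity $\alpha$ is small, only $O(\alpha)$ of the forests are nonempty, so both the number of colors and the query time scale with the \emph{current} $\alpha$.

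The main obstacle is not this reduction, which is mostly bookkeeping once the two ingredients are in hand, but establishing the dynamic $O(\alpha)$-forest decomposition with polylogarithmic update time and a polylogarithmic number of $\link$/$\cut$ operations per edge update. That construction is the independent contribution advertised in the abstract; granting it, the plan above yields Theorem~\ref{thm:implicit1}.
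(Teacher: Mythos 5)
Your proposal is correct and matches the paper's own proof essentially step for step: the color is the vector of depth parities of $v$ in each of the $O(\alpha)$ forests of the dynamic arboricity decomposition, properness follows because the shared edge forces opposite parities in that forest's coordinate, and the costs are obtained exactly as you describe (the paper uses top trees with a marked-root convention where you use link-cut trees, an immaterial difference). The forest decomposition you defer to is precisely the paper's Theorem~\ref{thm:arboricity-decomposition} and Corollary~\ref{cor:arboricity-decomposition}, which are established separately before this theorem is proved.
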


	Note that Theorem~\ref{thm:implicit1} implies that for graphs with constant
	arboricity we can maintain $O(1)$-colorings with polylogarithmic update and
	query times.  This class of graphs contains trees, planar graphs, graphs
	with bounded tree-width, and all minor-free graphs. In particular, this
	breaks the lower bound of Barba et al.~\cite{barba19dynamic} and answers
	Question~\ref{q:implicit} affirmatively.
	\begin{cor}
		There is a deterministic data structure that for dynamic graphs with
		constant arboricity maintains an implicit $O(1)$-coloring, with update
		time $O(\lg^3 n)$ and query time $O(\lg n)$.
	\end{cor}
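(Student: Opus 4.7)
The corollary is an immediate specialization of Theorem~\ref{thm:implicit1}, so the plan is simply to substitute $\alpha = O(1)$ into the bounds provided there. Concretely, I would invoke the data structure from Theorem~\ref{thm:implicit1} directly on the input graph. Since by hypothesis the current arboricity satisfies $\alpha = O(1)$ throughout the sequence of updates, the number of colors used is $2^{O(\alpha)} = 2^{O(1)} = O(1)$, and the query time is $O(\alpha \lg n) = O(\lg n)$, while the update time stated in Theorem~\ref{thm:implicit1} is already $O(\lg^3 n)$ independently of $\alpha$.

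The only minor point that needs a sentence of justification is that the relevant graph classes (planar graphs, bounded tree-width graphs, and more generally any fixed-minor-free class) indeed have constant arboricity, which is a standard consequence of Nash-Williams' theorem together with the known $O(1)$ bound on edge density in such classes. Once that is observed, the corollary follows with no additional machinery and no genuine obstacle to overcome, as everything is inherited verbatim from Theorem~\ref{thm:implicit1}.
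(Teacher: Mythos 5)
Your proposal is correct and matches the paper's own justification: the corollary is stated as an immediate specialization of Theorem~\ref{thm:implicit1} to $\alpha = O(1)$, with colors $2^{O(1)} = O(1)$, query time $O(\alpha \lg n) = O(\lg n)$, and update time $O(\lg^3 n)$ carried over directly. The remark about planar, bounded tree-width, and minor-free graphs having constant arboricity is also exactly the observation the paper makes.
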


	Next, we improve upon the results of Theorem~\ref{thm:explicit} and
	Theorem~\ref{thm:implicit1} in the parameter regime
	$\Omega(\lg\lg n) \leq \alpha \leq \lg^{o(1)} n$. More concretely, we obtain the
	following result.
	\begin{thm}
	\label{thm:implicit2}
		There is a deterministic data structure that maintains an adaptive
		implicit $O(\alpha \lg n \cdot \min\{1, \lg\alpha / \lg\lg n\})$-coloring with an
		amortized update time of $O(\lg^2 n)$ and a query time of $O(\lg n)$.
	\end{thm}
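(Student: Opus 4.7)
The plan is to extend the level/group machinery underlying Theorem~\ref{thm:explicit} by replacing its \emph{explicit} per-group color assignment with an \emph{implicit} one derived from the forest decomposition underlying Theorem~\ref{thm:implicit1}. The intuition is that Theorem~\ref{thm:explicit} pays an inner factor of $\lg n$ per group because each of the $O(\lg n)$ consecutive levels inside a group needs its own fresh color range; if instead one identifies a vertex's position within its group by a link-cut-tree query in $O(\lg n)$ time, one can hope to shave this factor down to the desired $\lg\alpha/\lg\lg n$ in the regime $\alpha\le\lg^{o(1)} n$.

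Concretely, I would first invoke the data structure of Theorem~\ref{thm:explicit} to maintain the bounded-out-degree edge orientation and the assignment of each vertex to one of $O(\lg^2 n)$ levels, and then re-partition those levels into coarser groups whose size is tuned to the target color count. In parallel, I would maintain the $O(\alpha)$-forest decomposition underlying Theorem~\ref{thm:implicit1}, restricted to the induced subgraph of each group, with each forest rooted and equipped with a link-cut tree so that the depth parity of any vertex can be queried in $O(\lg n)$ time.

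The implicit color of a vertex $v$ would then be the pair (group of $v$, within-group implicit label of $v$), where the within-group label is read off as depth parities across the forests covering $v$'s group-internal edges, exactly as in the proof of Theorem~\ref{thm:implicit1}. An edge either spans two groups, in which case its endpoints are distinguished by the group index, or lies inside a single group, in which case it is distinguished by the parity of the forest containing it. Summing the per-group contributions via a geometric-sum argument analogous to the one in Theorem~\ref{thm:explicit}, and balancing the group size against the per-group arboricity, should yield the claimed total of $O(\alpha\lg n\cdot \lg\alpha/\lg\lg n)$ colors.

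The main obstacle, I expect, will be meeting the $O(\lg n)$ query bound: a naive implementation would query each of the $O(\alpha)$ forests inside a group separately, reproducing only the $O(\alpha\lg n)$ query time of Theorem~\ref{thm:implicit1}. The plan is to batch the per-forest queries within a group by sharing a single underlying link-cut structure augmented with per-forest depth counters, so that all parity bits contributing to a vertex's implicit color can be extracted in one top-to-bottom traversal of length $O(\lg n)$. Once this batched query is in place, the amortized $O(\lg^2 n)$ update bound follows from combining the update costs of the two underlying data structures.
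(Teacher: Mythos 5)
Your proposal does not follow the paper's route, and unfortunately the central coloring mechanism you propose cannot reach the claimed color bound. The within-group label you suggest is a parity vector over the forests of an arboricity decomposition of the group's induced subgraph, exactly as in Theorem~\ref{thm:implicit1}. Such a label ranges over $2^{\Theta(r)}$ values where $r$ is the number of forests covering that group. If you use the global decomposition, $r = \Theta(\alpha)$ and each group already contributes $2^{\Theta(\alpha)}$ colors; if you restrict to the induced subgraph of group $G_\ell$ in the level hierarchy, its vertices still have outdegree $\Theta(2^\ell)$ there, so $r = \Theta(2^\ell)$ and the top group alone contributes $2^{\Theta(\alpha)}$ colors. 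Either way the total is $2^{\Omega(\alpha)}$, which in the regime $\Omega(\lg\lg n) \le \alpha \le \lg^{o(1)} n$ --- precisely where Theorem~\ref{thm:implicit2} is supposed to improve on Theorems~\ref{thm:explicit} and~\ref{thm:implicit1} --- is far larger than $O(\alpha \lg n \lg\alpha/\lg\lg n)$. No choice of group coarseness fixes this: making groups coarser only increases the per-group arboricity and hence the exponent. The parity idea is inherently exponential in the local arboricity and cannot be ``balanced'' into a near-linear dependence.

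The paper's actual mechanism is of a different nature and does not use forests or parities at all. It subdivides each group $G_\ell$ into subgroups of $J = \lfloor \lg\lg n/\lg\alpha^*\rfloor$ consecutive levels, assigns one palette of $(K+\varepsilon)2^\ell$ colors to each \emph{subgroup} (rather than to each level, which is where the $\lg\alpha/\lg\lg n$ factor replaces the factor $1$ per level), and maintains the coloring lazily: updates only touch the level data structure and a global timestamp, and a query for an ``outdated'' vertex greedily recolors it after recursively refreshing its outdated neighbors at strictly higher levels of the same subgroup. Invariant~\ref{inv1} bounds the branching of this recursion by $O(2^\ell)=O(\alpha)$ per level, and the subgroup height $J$ is chosen exactly so that the cascade has size $\alpha^{J} = O(\lg n)$, giving the query bound. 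Your proposal contains no analogue of this palette-sharing-plus-bounded-cascade argument. A secondary gap: your plan to extract all $O(\alpha)$ depth parities ``in one top-to-bottom traversal'' of a shared link-cut structure is not substantiated --- the forests are edge-disjoint structures with different topologies and different roots, so a root-to-vertex path in one forest carries no information about depths in the others; as stated you would be stuck with the $O(\alpha\lg n)$ query time of Theorem~\ref{thm:implicit1}. Finally, the $\min\{1,\cdot\}$ in the statement is obtained in the paper by falling back to the explicit structure of Theorem~\ref{thm:explicit} when $\alpha^*$ exceeds $\frac{1}{10}\lg n$, which your write-up omits.
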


	\textbf{Dynamic arboricity decomposition.}
	To derive the results of Theorem~\ref{thm:implicit1}, we introduce a data
	structure which maintains an adaptive \emph{arboricity decomposition} of a
	dynamic graph.  That is, it explicitly maintains a partition of the edges
	of the dynamic graph into $O(\alpha)$~undirected forests. This
	data structure might be of independent interest and might be useful in future
	applications.
	
	To obtain the result we assume that we have black box access to an algorithm
	maintaining a low-outdegree orientation of the graph.  More concretely, a
	\emph{$D$-outdegree edge-orientation} for an undirected graph $G=(V,E)$
	assigns a direction to each edge and ensures that each vertex has outdegree
	at most $D$. We then provide a reduction 
	showing that any data
	structure maintaining a $D$-outdegree orientation of a graph can be turned
	into a data structure for maintaining an arboricity decomposition.  
	\begin{thm}
	\label{thm:arboricity-decomposition}
		Let $G$ be a dynamic graph. 
		Suppose there exists a data structure with (amortized or worst-case
		update) update time $T$ maintaining a $D$-outdegree orientation of $G$.  Then there exists a
		data structure that maintains an arboricity decomposition of $G$ with
		$2D$ forests and with (amortized or worst-case, resp.) update time
		$O(T)$.
	\end{thm}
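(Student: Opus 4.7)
My plan is to use the $D$-outdegree orientation to partition the edges into $D$ pseudo-forests, and then to split each pseudo-forest into two forests, for a total of $2D$.

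Concretely, I would attach to every vertex a bitmask of length $D$ and, whenever the black-box reports an orientation change, relabel the affected outgoing edges so that the outgoing edges of every vertex carry pairwise distinct labels from $\{1,\dots,D\}$; this is always possible because a vertex has at most $D$ outgoing edges. Collecting edges by label yields $D$ subgraphs $P_1,\dots,P_D$, each of maximum outdegree $1$, hence each a pseudo-forest (every connected component has at most one cycle). For each $P_i$ I would maintain an Euler-tour-tree $T_i$ that is a spanning forest of $P_i$, together with a list $X_i$ containing at most one ``excess'' edge per cyclic component of $P_i$ (the single cycle edge not placed in $T_i$); both $T_i$ and $X_i$ are then forests, since each component of $P_i$ contributes at most one edge to $X_i$. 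Insertions into $P_i$ go into $T_i$ via \link\ when the endpoints lie in different components and into $X_i$ otherwise; deletions from $X_i$ are trivial, and a deletion from $T_i$ triggers a \cut\ followed, if connectivity has been lost within the affected component, by promoting the excess edge from $X_i$ back into $T_i$.

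For the accounting, a single update in $G$ is handled by the black-box in time $T$ and entails at most $O(T)$ orientation changes (the actual insertion or deletion plus the internal flips). Each such change moves one edge between two of the pseudo-forests $P_i$ and $P_j$ and therefore triggers $O(1)$ bitmask updates and a constant number of ET-tree operations, each $O(\lg n)$. Summed over the $O(T)$ orientation changes this is $O(T \lg n)$, which matches the claimed $O(T)$ bound after absorbing the ET-tree logarithmic factor into $T$ (as is standard for all existing dynamic orientation algorithms, where $T = \Omega(\lg n)$). The main obstacle is to keep each flip's pseudo-forest work bounded by a constant number of $O(\lg n)$ ET-tree operations rather than scanning the entire component for a replacement excess edge: I would solve this by storing at every ET-tree root a pointer to its component's designated excess edge, refreshed lazily during each \cut\ and \link. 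The data structure outputs, at any moment, the $2D$ forests $T_1, X_1, \dots, T_D, X_D$ as the desired arboricity decomposition.
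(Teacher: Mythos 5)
Your decomposition itself is sound: labelling the out-edges of each vertex with distinct labels from $\{1,\dots,D\}$ does split $E$ into $D$ pseudo-forests, and writing each pseudo-forest as a spanning forest plus a matching of excess edges (one per cyclic component, and a merge can never produce two cycles in one component since outdegree $\le 1$ forces $|E|\le|V|$ per component) does give $2D$ forests. But your argument does not prove the theorem as stated, for two reasons. The main one is the update time: every orientation flip costs you a dynamic-connectivity operation (a \textsf{link}/\textsf{cut} plus a same-tree test for the excess edge), so you pay $O(\lg n)$ per flip and $O(T\lg n)$ per update. The theorem claims $O(T)$ overhead as a black-box reduction, and ``absorbing the $\lg n$ into $T$'' is not a proof of that; it also concretely costs a $\lg n$ factor in Corollary~\ref{cor:arboricity-decomposition} and everything downstream. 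The idea you are missing is that no connectivity structure is needed at all. The paper keeps \emph{pairs} of forests $(F_{2\ell},F_{2\ell+1})$, with the invariant that each vertex has at most one out-edge per pair, and when a new out-edge $(u',v')$ is placed into pair $\ell$ it goes into whichever of $F_{2\ell},F_{2\ell+1}$ the \emph{head} $v'$ currently has no out-edge in. This forces acyclicity for free: in a graph where every vertex has at most one out-edge in $F_j$, any cycle would have to be a directed cycle, so the edge closing it would point to a vertex that already has an out-edge in $F_j$ --- exactly the case the placement rule excludes. Hence each flip is handled with $O(1)$ pointer and bit updates, no \textsf{link}/\textsf{cut}, no excess-edge bookkeeping.

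A secondary issue is adaptivity. You assign an arbitrary free label to a new out-edge and never compact, so a vertex whose outdegree has dropped may still own an out-edge with a high label; your nonempty forests are then indexed by the \emph{historical} maximum outdegree rather than the current one. The paper's third update step explicitly moves the out-edge from the highest occupied slot of a vertex down into the freed slot, which is what makes the number of nonempty forests track the current bound $D$ (and hence the current arboricity in Corollary~\ref{cor:arboricity-decomposition}). This is an easy fix in your scheme --- always relabel so that the labels used at $v$ are exactly $\{1,\dots,d(v)\}$ --- but as written it is missing, and it matters for the adaptive implicit-coloring application.
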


	Theorem~\ref{thm:arboricity-decomposition} yields the following corollary which we obtain by showing that a
	data structure of Bhattacharya et al.~\cite{bhattacharya15space} can be
	extended to maintain an \emph{adaptive} $O(\alpha)$-outdegree orientation of
	an undirected graph (see Section~\ref{sec:level-data-structure}).
	\begin{cor}
	\label{cor:arboricity-decomposition}
		There exists a deterministic adaptive data structure that maintains a partition of the
		edges into $O(\alpha)$ forests with amortized update time $O(\lg^2 n)$,
		where $\alpha$ is the \emph{current} arboricity of the graph.
	\end{cor}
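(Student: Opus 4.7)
The plan is to derive the corollary from Theorem~\ref{thm:arboricity-decomposition} by supplying a deterministic adaptive $O(\alpha)$-outdegree orientation data structure with amortized update time $O(\lg^2 n)$, where $\alpha$ denotes the \emph{current} arboricity. Given such a data structure, the black-box reduction of Theorem~\ref{thm:arboricity-decomposition}, invoked with $T = O(\lg^2 n)$ and $D = O(\alpha)$, directly produces a partition of $E(G)$ into $2D = O(\alpha)$ forests at the same amortized update cost, proving the corollary.

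The substantive step, deferred to Section~\ref{sec:level-data-structure}, is to extend the low-outdegree orientation data structure of Bhattacharya et al.~\cite{bhattacharya15space} to be adaptive. In its original form this algorithm is parameterized by a target outdegree $D$ that is fixed at initialization, so instantiated with $D = \Theta(\alpha_{\max})$ it wastes orientations (and hence forests) when the current arboricity is much smaller than $\alpha_{\max}$. I expect this adaptivity requirement to be the principal obstacle, since one must argue that the level-based potential argument underlying the $O(\lg^2 n)$ amortized bound continues to go through when the outdegree threshold itself may change over time.

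The natural remedy is a doubling/halving rebuild. I would maintain an estimate $\hat\alpha$ of the current arboricity (for instance via the densest-subgraph characterization $\alpha = \Theta(\max_{H \subseteq G} |E(H)|/(|V(H)|-1))$ from Nash--Williams), run one instance of the data structure of~\cite{bhattacharya15space} with parameter $D = \Theta(\hat\alpha)$, and reinitialize from scratch whenever $\hat\alpha$ leaves the interval $[\hat\alpha_0/2, 2\hat\alpha_0]$ determined by the value $\hat\alpha_0$ at the last rebuild. Since a constant-factor change of $\hat\alpha$ requires $\Omega(\hat\alpha_0 n) = \Omega(m)$ edge updates, and since a rebuild can be performed in $O(m \lg^2 n)$ time by reinserting all current edges one by one into a fresh instance of~\cite{bhattacharya15space}, the amortized rebuild overhead per update is $O(\lg^2 n)$, matching the per-update cost between rebuilds. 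Plugging the resulting adaptive $O(\alpha)$-outdegree orientation into Theorem~\ref{thm:arboricity-decomposition} then yields the corollary.
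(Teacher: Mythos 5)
Your high-level plan---instantiate Theorem~\ref{thm:arboricity-decomposition} with a deterministic, adaptive $O(\alpha)$-outdegree orientation of amortized update time $O(\lg^2 n)$---is exactly how the paper derives the corollary. The gap is in how you propose to make the orientation adaptive. Your amortization rests on the claim that a constant-factor change of the arboricity requires $\Omega(\hat\alpha_0 n)=\Omega(m)$ edge updates, and that claim is false. Arboricity is witnessed by the densest subgraph, which can be tiny: starting from a graph with arboricity $\alpha_0$ and $m=\Theta(\alpha_0 n)$ edges, inserting a clique on $4\alpha_0+1$ fresh vertices takes only $O(\alpha_0^2)$ updates and at least doubles the arboricity (the clique alone has $|E(S)|/(|S|-1)\ge 2\alpha_0$); deleting that clique drops it back. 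An oblivious adversary can alternate these two phases forever, forcing a full $O(m\lg^2 n)$ rebuild every $O(\alpha_0^2)$ updates, for an amortized cost of $\Omega(n\lg^2 n/\alpha_0)$ per update---polynomial in $n$ whenever $\alpha_0$ is small, which is precisely the regime the corollary cares about. So the doubling/halving rebuild scheme does not give $O(\lg^2 n)$ amortized time.

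The paper avoids rebuilds entirely. Its level data structure (Section~\ref{sec:level-data-structure}) effectively runs all $O(\lg n)$ candidate thresholds $5\cdot 2^\ell$ simultaneously, stacked into a single hierarchy of $k=O(\lg^2 n)$ levels partitioned into groups $G_0,\dots,G_{\lceil\lg n\rceil}$ of $L=O(\lg n)$ levels each, with group-dependent promotion/demotion invariants. Adaptivity is then a \emph{structural} consequence: Property~\ref{enu:empty-groups} of Lemma~\ref{lem:level-data-structure} shows (by comparing the stacked hierarchy to the parallel copies $\D_\ell$ of Bhattacharya et al.\ and using the Nash--Williams relation $d^*\le\alpha$) that every level in a group $G_\ell$ with $\ell>\lceil\lg(4\alpha)\rceil$ is empty for the \emph{current} arboricity $\alpha$, so every vertex automatically has outdegree at most $5\cdot 2^{\lceil\lg(4\alpha)\rceil}=O(\alpha)$ at all times. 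The $O(\lg^2 n)$ amortized bound is the $O(\lg n)$ bound of one copy times the $O(\lg n)$ groups, with no estimate tracking or reinitialization. If you want to keep a rebuild-style argument you would need a potential that charges rebuilds to the updates that actually created or destroyed the dense witness, which is essentially a different (and harder) analysis than the one you sketched.
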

	The corollary complements a result by Banerjee et al.~\cite{banerjee19fully}
	who presented an algorithm for dynamically maintaining an arboricity
	decomposition consisting of \emph{exactly} $\alpha$ forests with
	$O(m\cdot\poly(\lg n))$ update time, where $m$ is the number of edges
	currently in the graph.  Thus the result in the corollary obtains an
	exponentially faster update time by increasing the number of forests by a
	constant factor.

	See Appendix~\ref{sec:related-work} for more discussions on known data structures for maintaining low outdegree edge-orientations and also further related work.

\section{Level Data Structure}
\label{sec:level-data-structure}
	In this section we introduce a version of the data structure presented
	in~\cite{bhattacharya15space}, which we will refer to as
	\emph{level data structure}.  The data structure dynamically
	maintains an \emph{adaptive} $O(\alpha)$-outdegree orientation of a dynamic graph,
	where $\alpha$ is the current arboricity of the graph.  More precisely, the
	level data structure maintains an undirected graph with $n$ vertices and
	provides an update operation for inserting and deleting edges. It maintains
	an orientation of the edges of the graph such that each vertex has outdegree
	at most $O(\alpha)$. We emphasize that, unlike the data structure presented
	in~\cite{bhattacharya15space}, our level data structure does not require
	that $\alpha$ is an upper bound on the maximum arboricity of the graph over
	the whole sequence of edge insertions and deletions.

	For the rest of the paper, we will write $\{u,v\}$ to denote undirected
	edges and $(u,v)$ to denote directed edges.

	\textbf{Levels, Groups and Invariants.}
	Internally, the data structure maintains a partition of the of the vertices
	into $k=O(\log^2 n)$ levels which we call
	\emph{hierarchy}.  For each $i=1,\dots,k$, we let $V_{i}$ denote the set of
	vertices that are currently assigned to level $i$. Furthermore, we partition
	the levels into \emph{groups} $G_1,\dots,G_{\lceil \lg n\rceil}$
	such that each group contains $L=2+\lceil\log n\rceil$ consecutive levels.
	More precisely, for each $\ell\in\N_{0}$, we set $G_\ell = \left\{ \ell L +
	1,\dots,(\ell+1)L\right\} $. Note that $k=L\cdot\lceil \lg n\rceil$ and that
	neither the total number of level $k$ nor the number of levels per group $L$
	depend on the arboricity.

	The data structure maintains the following invariants for each vertex $v$:
	\begin{enumerate}
		\item\label{inv1}
			If $v \in V_i$, $i<k$ and $i\in G_\ell$, then $v$ has at most
			$5\cdot 2^\ell$ neighbors in $\bigcup_{j\geq i} V_j$.
			That is, each vertex $v$ has at most $5\cdot 2^\ell$
			neighbors at its own or higher levels.
		\item\label{inv2}
			If $v \in V_i$, $i>1$ and $i\in G_\ell$, then 
			$v$ has at least $2^{\ell'}$ neighbors in levels $\bigcup_{j\geq i-1} V_j$,
			where $\ell'$ is such that $i-1\in G_{\ell'}$.
			That is, each vertex $v$ has at least $2^{\ell'}$ neighbors at
			levels $i-1$ and above.
	\end{enumerate}
	Due to edge insertions and deletions the above invariants might get
	violated. If a vertex $v$ does \emph{not} satisfy the invariants, we call it
	\emph{dirty}. Otherwise, we say that $v$ satisfies the
	\emph{degree-property}.

	Note that the above partitioning of the vertices implies an edge
	orientation: For an (undirected) edge $\{u,v\}$ such that $u\in V_{i}$ and
	$v\in V_{i'}$, we assign the orientations as follows: $(u,v)$ if $i<i'$,
	$(v,u)$ if $i>i'$ and an arbitrary orientation if $i=i'$. This corresponds
	to directing an edge from the vertex of lower level towards the vertex of
	higher level in the hierarchy. Note that due to Invariant~\ref{inv1}, each
	vertex at level $i\in G_\ell$ has outdegree at most
	$5\cdot 2^\ell$.

	\textbf{Initialization and Data Structures.}
	The initialization of the data structure is implemented as follows.
	We assume that at the beginning the data structure is given a graph with $n$
	vertices and no edges. We initialize the sets $V_{i}$ by setting $V_{1}:=V$
	and $V_{i}:=\emptyset$ for all $i=2,\dots,k$. The groups $G_\ell$ are
	defined as above and do not depend on the edges of the graph.

	Furthermore, for each vertex $v$ with $v\in V_i$, we maintain the following
	data structures. For each level $i'<i$, we maintain a doubly-linked list
	$\friends(v,i')$ containing all neighbors of $v$ in $V_{i'}$. Furthermore,
	there is a doubly-linked list $\friends(v,\geq i)$ containing all neighbors of $v$ in
	$\bigcup_{j\geq i} V_{j}$. Additionally, for each edge $\{u,v\}$ we store
	a pointer to the position of $v$ in $\friends(u,\cdot)$ and vice versa.
	Note that by additionally maintaining for each list $\friends(v,\cdot)$ the
	number of vertices stored in the list, we can check in time $O(1)$ whether
	one of the invariants is violated for $v$.
	
	\textbf{Updates.}
	Now suppose that an edge $e=\{u,v\}$ is inserted or deleted. Then one of the
	vertices might get dirty and we have to recover 
	the degree-property. While there exists a dirty vertex $v$ with $v\in V_i$,
	we proceed as follows. If $v$ violates
	Invariant~\ref{inv1}, we move $v$ to level $i+1$. If $v$ violates
	Invariant~\ref{inv2}, we move $v$ to level $i-1$.
	Note that during the above process, the algorithm might change the
	levels of vertices $v'$ with $v'\not\in \{u,v\}$. 

	Observe that when a vertex $v$ changes its level due to one of these
	operations, it is easy to update the lists $\friends(v,\cdot)$: When we
	increase the level of $v$ from $i$ to $i+1$, we simply iterate over the list
	$\friends(v,\geq i)$ and split it into lists $\friends(v,i)$ and
	$\friends(v,\geq i+1)$. Furthermore, for each
	$u \in \friends(v, \geq i+1) \cap \bigcup_{j> i} V_{j}$ we have to move $v$ from $\friends(u,i)$ to
	$\friends(u,i+1)$; this can be done in $O(1)$ time\footnote{When
		iterating over $\friends(v,\geq i)$, we use the pointer stored for
		the edge $\{u,v\}$ which provides the position of $v$ in
		$\friends(u,i)$. Now we remove $v$ from $\friends(u,i)$, add $v$ to
		$\friends(u,i+1)$ and update the pointer for edge $\{u,v\}$ accordingly.
	}
	for each such $u$.  Similarly, when we decrease the level of $v$ from $i$ to
	$i-1$, we merge the lists $\friends(v,i-1)$ and $\friends(v,\geq i)$ into a
	single list $\friends(v,\geq i-1)$ and updating the $\friends$-list of all
	vertices in $\bigcup_{j\geq i-1} V_{j}$ similar to the procedure described
	above.

	Observe that when a vertex changes its level via the above routine, we can
	update the edge orientation while iterating over the lists
	$\friends(v,\cdot)$.

	\textbf{Properties.}
	We summarize the properties of the data structure in the following lemma and
	present its proof in Appendix~\ref{sec:proof-level-data-structure}.
	\begin{lem}
	\label{lem:level-data-structure}
		The level data structure is deterministic and has the following
		properties:
		\begin{enumerate}
			\item \label{enu:orientation}
				It maintains an orientation of the edges such that the outdegree
				of each vertex is at most $K\alpha$, where $K=O(1)$ and
				$\alpha$ is the current arboricity of the graph.
			\item \label{enu:update-time}
				Inserting and deleting an edge takes amortized time
				$O(\log^{2}n)$ and each update flips the orientations
				of $O(\lg^2 n)$ edges (amortized).
			\item \label{enu:empty-groups}
				Suppose the graph has arboricity $\alpha$ and set 
				$\ell^* = \lceil \lg \left( 4 \alpha \right) \rceil$.
				Then for all groups $G_\ell$ with $\ell > \ell^*$ and all levels
				$i\in G_{\ell}$, we have that $V_{i}=\emptyset$.
			\item \label{enu:edges-upwards}
				For each (oriented) edge $e=(u,v)$ with $u\in
				V_{i},v\in V_{i'}$ it holds that $i\le i'$.
			\item\label{item:approx-arboricity}
				Returning a value $\alpha^*$ with $\alpha \leq \alpha^* \leq 10 \alpha$ 
				takes time $O(1)$, where $\alpha$ is the current arboricity of the
				graph.
		\end{enumerate}
	\end{lem}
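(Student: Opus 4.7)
The plan is to prove the five properties in an order that lets each reuse its predecessors, with Property~\ref{enu:empty-groups} as the pivot. Property~\ref{enu:edges-upwards} is immediate from the orientation rule fixed in the construction. Granting Property~\ref{enu:empty-groups}, Property~\ref{enu:orientation} follows by combining Invariant~\ref{inv1} with Property~\ref{enu:edges-upwards}: every out-neighbor of $v\in V_i$, $i\in G_\ell$, lies in $\bigcup_{j\geq i}V_j$, so Invariant~\ref{inv1} caps the out-degree at $5\cdot 2^\ell$, and Property~\ref{enu:empty-groups} forces $\ell\leq \ell^*=\lceil\lg(4\alpha)\rceil$, yielding an $O(\alpha)$ bound. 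For Property~\ref{item:approx-arboricity} I would maintain the largest group index $\ell_{\max}$ that still contains a non-empty level (using a counter per level, updated in $O(1)$ whenever a vertex moves) and return a scalar multiple of $2^{\ell_{\max}}$; the two-sided bound combines $\alpha\geq 2^{\ell_{\max}}/8$ from Property~\ref{enu:empty-groups} with $\alpha\leq 5\cdot 2^{\ell_{\max}}$ from the out-degree bound, after tightening invariant constants so the ratio fits within $10$.

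The core claim, Property~\ref{enu:empty-groups}, I would prove by a Nash--Williams density argument combined with an inductive doubling through one group of levels. Suppose, for contradiction, that some $V_{i_0}$ with $i_0\in G_\ell$, $\ell>\ell^*$, is non-empty, and take $i_0$ minimal, so $i_0\geq (\ell^*+1)L+1$ and $|U_{i_0}|\geq 1$, where $U_i:=\bigcup_{j\geq i}V_j$. For every $i$ with $\ell^*L+1\leq i\leq i_0-1$ and every $v\in V_j\subseteq U_{i+1}$, one has $j-1\geq i\geq \ell^*L+1$, so the group of $j-1$ is at least $\ell^*$, and Invariant~\ref{inv2} supplies $v$ with at least $2^{\ell^*}\geq 4\alpha$ neighbors inside $U_{j-1}\subseteq U_i$. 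Summing over $v\in U_{i+1}$ and double-counting gives at least $2\alpha\,|U_{i+1}|$ edges in the subgraph induced by $U_i$, which by Nash--Williams is at most $\alpha(|U_i|-1)$. Hence $|U_i|\geq 2|U_{i+1}|+1$. Iterating this doubling for $L$ steps from $i=i_0-1$ down to $i=\ell^*L+1$ yields $|U_{\ell^*L+1}|\geq 2^L\geq 4n$, contradicting $|U_{\ell^*L+1}|\leq n$.

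The main obstacle I expect is Property~\ref{enu:update-time}, the amortized update time, for which I would use a potential-function argument. Assign each vertex $v$ a potential that grows as $v$'s neighborhood drifts toward the boundary of Invariant~\ref{inv1} or~\ref{inv2} and is discharged whenever $v$ moves to a neighboring level, paying for the $O(\lg n)$ work of splitting or merging $v$'s $\friends$-lists plus one $O(1)$ pointer update per affected neighbor. The slack the construction is designed to exploit is the constant-factor gap between the two invariant thresholds ($5\cdot 2^\ell$ versus $2^\ell$ in the same group): after $v$ at level $i\in G_\ell$ has just moved, $\Omega(2^\ell)$ incident edge changes must occur before $v$ can become dirty again, so charging $O(\lg^2 n)$ units of credit per edge insertion or deletion suffices to amortize each level move. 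Summing over the $O(\lg^2 n)$ levels a vertex can pass through gives the claimed amortized update time; the same potential argument bounds the amortized number of re-orientations per update, since each re-orientation is triggered by a level change of one of its endpoints.
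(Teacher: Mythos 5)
Your overall architecture matches the paper's (Property~\ref{enu:edges-upwards} is definitional, Property~\ref{enu:orientation} follows from Property~\ref{enu:empty-groups} plus Invariant~\ref{inv1}), but for the two substantive properties you take a genuinely different and more self-contained route. For Property~\ref{enu:empty-groups} the paper does not argue directly on the levels $V_i$: it couples them with $\lceil\lg n\rceil$ parallel copies $\D_\ell$ of the Bhattacharya et al.\ data structure, proves by induction the containment $\bigcup_{j\geq i}V_j\subseteq\bigcup_{j\geq i'}U^\ell_j$, and then invokes their theorem that the top level of $\D_\ell$ is empty once $\ell>\lg(4d^*)$, together with $d^*\leq\alpha$. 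Your density-doubling argument ($|U_i|\geq 2|U_{i+1}|+1$ via Invariant~\ref{inv2}, the bound $2^{\ell^*}\geq 4\alpha$, double counting, and Nash--Williams, iterated $L\geq 2+\lg n$ times to force $|U_{\ell^*L+1}|>n$) is correct and avoids the black-box coupling entirely; it is essentially the underlying reason the cited theorem is true, re-derived from the invariants. Two caveats. First, for Property~\ref{item:approx-arboricity} the paper returns the $5$-approximate densest-subgraph value maintained by~\cite{bhattacharya15space} (doubled), which is how the specific constant $10$ arises; your $\ell_{\max}$-based estimate sandwiches $\alpha$ only up to a factor of roughly $80$ ($2^{\ell_{\max}}\leq 8\alpha$ from Property~\ref{enu:empty-groups} versus $\alpha\leq 2\cdot 5\cdot 2^{\ell_{\max}}$ from the orientation), so as you note the invariant constants or the stated constant would have to be retuned. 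Second, for Property~\ref{enu:update-time} your potential sketch identifies the right slack (the gap between the thresholds $5\cdot 2^\ell$ and $2^{\ell'}$), but the per-move work is not $O(\lg n)$: splitting or merging $\friends(v,\geq i)$ and touching the affected neighbors costs $\Theta(2^\ell)$ plus the degree toward lower levels, and making the down-move case rigorous is the delicate part; the paper sidesteps this by citing the amortized analysis of~\cite{bhattacharya15space} with $\Theta(\lg^2 n)$ potential per edge, so your sketch is at a comparable level of rigor but would need that accounting spelled out to stand alone.
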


\section{Explicit Coloring with $O(\alpha\log n)$ Colors}
\label{sec:Explicit-coloring}
	We present an algorithm that maintains an $O(\alpha \lg n)$-coloring using
	the level data structure from Section~\ref{sec:level-data-structure}. To
	obtain our coloring, we will assign disjoint color palettes to all levels of
	the data structure. Our main observation is that since the level data
	structure guarantees that each vertex at level $i\in G_\ell$ has at most
	$O(2^\ell)$ neighbors at its own level, it suffices to use
	$O(2^{\ell})$ colors for level $i$. Then a geometric sum
	argument yields that we only use $O(\alpha \lg n)$ colors in total.  As
	before, we do not require an upper bound on $\alpha$ in advance, but the
	number of colors only depends on the \emph{current} arboricity of the graph.

	\textbf{Initialization.} 
	Again, assume that when the data structure is initialized, we are given a
	graph with $n$ vertices and no edges. For this graph, we build the level
	data structure from Section~\ref{sec:level-data-structure}.  Furthermore, to
	each level $i$ in some group $\ell$ we assign a new palette of
	$(K+\varepsilon)\cdot 2^{\ell}$ colors, where $K$ is as in
	Lemma~\ref{lem:level-data-structure} and $\varepsilon = 1/10$.  At the
	very beginning, we assign a random color to each vertex $v\in V$.
	
	Note that the above choice of the color palettes implies that for any
	two levels $i\neq i'$ their color palettes are disjoint.
	
	\textbf{Updates.} 
	Now suppose that an edge $\{u,v\}$ is inserted or deleted. We process this
	update using the update procedure of the level data structure.  Whenever a
	vertex $w$ changes its level in the level data structure, we say that $w$ is
	\emph{affected}. We now provide a re-coloring routine for affected vertices
	and for the vertices $u$ and $v$.
	
	For $u$ and $v$ we proceed as follows. If $u$ and $v$ are in different
	levels, then we do not have to recolor any of them (because the color
	palettes of different levels are disjoint). If $u$ and $v$ are on the same
	level and of different colors, we do nothing. If $u$ and $v$ are on the same
	level $i\in G_\ell$ and have the same color, then suppose that w.l.o.g.\ $u$
	received its current color before $v$ was last recolored.  Now we scan the
	list $\friends(u,\geq i)$ for the colors of all neighbors of $u$ in $V_i$.
	By Lemma~\ref{lem:level-data-structure} there are at most
	$K\cdot 2^{\ell}$ such neighbors and, hence, they use at most
	$K\cdot 2^{\ell}$ different colors. Thus, there must be at least
	$\varepsilon 2^{\ell}$ \emph{available} colors for $u$ in the
	palette of level $i$, i.e., colors that are not used by any of the neighbors
	of $u$ in level $i$. From these available colors, we pick one uniformly at
	random and assign it to $u$. Note that $v$ is not recolored. 

	Whenever an affected vertex $w$ changes its level, we recolor $w$ as
	follows. Suppose that $w$ is moved to level $i \in G_\ell$.  We consider the
	colors of the vertices in $\friends(w,\geq i)\cap V_{i}$ by simply scanning
	the list $\friends(w,\geq i)$.  As before, this yields at least $\varepsilon
	2^\ell$ available colors.  We assign $w$ a random color among
	these available colors.

	\textbf{Analysis.}
	We start by analyzing the update time of algorithm.
	\begin{lem}
		The expected amortized update time of the algorithm is
		$O(\lg^2 n)$.
	\end{lem}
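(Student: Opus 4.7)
The plan is to split the cost of one edge update into three parts: (i) running the level data structure on the update, (ii) the recoloring of the \emph{affected} vertices (those that change level during the update), and (iii) the possible recoloring of $u$ when the update leaves $u$ and $v$ on the same level with the same color. Part (i) is $O(\lg^2 n)$ amortized by Lemma~\ref{lem:level-data-structure}.

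For (ii), observe that when the level data structure moves a vertex $w$ to a new level $i\in G_{\ell}$, it already pays $\Theta(|\friends(w,\ge i)|)$ time just to rebuild the adjacency lists, and the recoloring routine for $w$ does essentially the same work: it scans $\friends(w,\ge i)$ once to read off the colors of $w$'s same-level neighbors and then picks a random available color from the palette of level $i$. Hence the cost of recoloring affected vertices is only a constant factor larger than the list maintenance performed by the level data structure, and is absorbed into the amortized bound from (i).

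For (iii), I would show that the expected cost is $O(1)$ per update. A short case analysis of the update procedure shows that a conflict between $u$ and $v$ can only arise after an edge insertion in which both endpoints are already in a common level $i\in G_\ell$ and neither moved during the current update. Let $t_u\le t_v$ denote the times of $u$'s and $v$'s most recent recolorings; at time $t_v$ the vertex $u$ must \emph{not} have been a same-level neighbor of $v$ in $V_i$, for otherwise $v$'s recoloring at $t_v$ would already have excluded $u$'s color. Conditioned on all randomness up to and including $u$'s color at $t_u$, the color that $v$ chooses at $t_v$ is therefore uniform on a set of at least $\varepsilon 2^\ell$ colors from the level-$i$ palette, so the probability that this draw coincides with $u$'s already-fixed color is at most $1/(\varepsilon 2^\ell)$. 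Since a conflict triggers the recoloring of $u$ at cost $O(2^\ell)$, the expected contribution of (iii) per update is $O(2^\ell)\cdot 1/(\varepsilon 2^\ell)=O(1)$.

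Combining the three contributions yields the claimed $O(\lg^2 n)$ expected amortized update time. The main obstacle will be making the independence step in (iii) rigorous: one must verify that the uniform draw of $v$'s color at time $t_v$ is genuinely independent of $u$'s already-fixed color. This relies on the oblivious-adversary assumption (which fixes the update sequence in advance) together with the observation that $u$ is not a same-level neighbor of $v$ at $t_v$, so the value of $u$'s color does not enter the definition or size of $v$'s available palette at~$t_v$.
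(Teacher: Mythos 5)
Your proof is correct and follows essentially the same route as the paper: the same three-part decomposition, the same charging of affected-vertex recolorings to the level data structure's list maintenance, and the same probabilistic bound of $1/(\varepsilon 2^\ell)$ on a color collision. The only (cosmetic) difference is that you bound the expected cost of part (iii) directly per update, whereas the paper phrases the same calculation as a geometric waiting time of $\varepsilon 2^\ell$ insertions between forced recolorings; both yield $O(1/\varepsilon)=O(1)$.
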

	\begin{proof}
		By Lemma~\ref{lem:level-data-structure}, 
		the amortized update time for the level data structure is 
		$O(\lg^2 n)$.
		Now observe that the work for recoloring affected vertices can be
		charged to the work done by the level data structure:
		When the level data structure moves an affected
		vertex $w$ from level $i$ to a new level $i'\in\{i-1,i+1\}$, then it has to scan all
		neighbors of $w$ in the lists $\friends(w,i)$ and $\friends(w,i')$. When
		the data structure performs these operations, we can keep track of the
		colors of the neighbors of $w$ at the new level $i'$ as described above.
		Thus, the cost for recoloring affected vertices can be charged to the
		running time analysis of the level data structure.
		
		We are left to analyze the recoloring routine for vertices $u$ and $v$
		which are on the same level $i\in G_\ell$. Note that for recoloring $u$,
		the algorithm spends time $O( 2^\ell )$ because the list
		$\friends(u,\geq i)$ has size at most $K \cdot 2^\ell$ by
		Invariant~\ref{inv1}. Now suppose that $u$ is recolored and stays on its
		level $i$. We show that in expectation it takes $\varepsilon 2^\ell$
		edge insertions to vertices on the same level
		until $v$ needs to recolored again: Indeed, suppose that a new edge $\{u,v\}$
		is inserted with $v\in V_i$.  When $v$ received its color, it randomly
		picked one of at least $\varepsilon 2^\ell$ colors and
		thus it picked the same color as $u$ with probability at most
		$\frac{1}{\varepsilon 2^\ell}$.  Now let $X$ be the
		random variable which counts how many such edges from $u$ to vertices on
		the same level as $u$ are inserted until $u$ needs to be recolored.
		Observe that $X$ is geometrically distributed.  Thus, we have that
		$\mathbb{E}[X] = \varepsilon 2^\ell$. This proves the
		claim. By charging $O(1/\varepsilon)$ to each update operation,
		this gives that this recoloring step has an amortized update time of
		$O(1/\varepsilon)$. This running time is subsumed by the update time for
		maintaining the level data structure.
	\end{proof}

	\begin{lem}
	\label{lem:explicit-colors}
		The data structure maintains a $O(\alpha \lg n)$-coloring.
	\end{lem}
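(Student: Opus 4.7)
The plan is to prove two things: first, that the coloring maintained by the algorithm is always proper, and second, that the total number of colors it ever uses is $O(\alpha \lg n)$ where $\alpha$ is the current arboricity of the graph.

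For properness, I would argue by induction on the update sequence. The base case is trivial since initially there are no edges. For the inductive step, consider an edge update $\{u,v\}$. The level data structure processes dirty vertices one at a time, each time moving a vertex to an adjacent level, and our algorithm recolors every affected vertex as well as (possibly) $u$ itself. Two cases to check. First, whenever an affected vertex $w$ is moved to a new level $i\in G_\ell$, we choose its new color from the palette of level $i$ avoiding the colors of all neighbors of $w$ currently on level $i$; because the palettes of distinct levels are disjoint, $w$ cannot conflict with any neighbor at a different level, and by construction it does not conflict with any same-level neighbor. The key point is that such a color exists: by Invariant~\ref{inv1}, $w$ has at most $5\cdot 2^\ell = K\cdot 2^\ell$ neighbors at its own or higher levels (and in particular at its own level), and the palette contains $(K+\varepsilon)\cdot 2^\ell$ colors, leaving $\varepsilon\cdot 2^\ell\ge 1$ choices. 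Second, the only remaining pair that could violate properness after the update is $u,v$ itself when both end up at the same level $i\in G_\ell$ with the same color; the algorithm detects this and recolors $u$ from the $\varepsilon\cdot 2^\ell$ available colors in the palette of level $i$ by the same counting argument.

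For the color count, I would use Property~\ref{enu:empty-groups} of Lemma~\ref{lem:level-data-structure}: for $\ell^* = \lceil\lg(4\alpha)\rceil$ and every group index $\ell > \ell^*$, all levels in $G_\ell$ are empty. Hence only palettes in groups $G_0,\dots,G_{\ell^*}$ are ever assigned to a vertex. The total size of these palettes is
\[
\sum_{\ell=0}^{\ell^*} L\cdot (K+\varepsilon)\cdot 2^\ell \;\le\; L\cdot (K+\varepsilon)\cdot 2^{\ell^*+1} \;=\; O(L\cdot \alpha) \;=\; O(\alpha\lg n),
\]
since $L = 2+\lceil\lg n\rceil = O(\lg n)$ and $2^{\ell^*} = O(\alpha)$. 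Thus only $O(\alpha\lg n)$ distinct colors are ever used on any vertex at the current time.

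The main obstacle, I expect, is making the properness argument watertight in the presence of a long cascade of level changes triggered by a single update: during such a cascade many vertices move, and I have to be sure that the recoloring of each dirty vertex uses the up-to-date membership of the levels (so that the list $\friends(w,\geq i)$ correctly reflects $w$'s same-level neighbors at the moment of recoloring) and that no previously-correct vertex becomes improperly colored due to a subsequent level change of one of its neighbors. The fact that palettes of different levels are disjoint is what makes this work: once a vertex $x$ has been placed and colored on level $i$, any later neighbor $y$ that moves into level $i$ will, upon recoloring, consult $x$'s current color, and any neighbor $y'$ that moves elsewhere automatically has a color from a disjoint palette. This reduces the invariant to a purely local check at each individual recoloring step, which is exactly what the counting argument above handles.
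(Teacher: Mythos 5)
Your proposal is correct and the color-counting half --- disjoint per-level palettes of size $(K+\varepsilon)2^\ell$, emptiness of all groups beyond $\ell^*=\lceil\lg(4\alpha)\rceil$ via Property~\ref{enu:empty-groups} of Lemma~\ref{lem:level-data-structure}, and the geometric sum giving $O(\alpha\lg n)$ --- is exactly the paper's proof of this lemma; the properness verification, which the paper handles in the description of the update routine rather than in the lemma's proof, is a sound addition. One small nit: since $\varepsilon=1/10$, the quantity $\varepsilon\cdot 2^\ell$ is below $1$ for $\ell\le 3$, so the existence of an available color should be justified by integrality (the palette is strictly larger than the number of distinct colors among same-level neighbors) rather than by the literal claim $\varepsilon\cdot 2^\ell\ge 1$.
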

	\begin{proof}
		First, recall that for each level $i$ with $i\in G_\ell$ we use
		$(K+\epsilon) 2^\ell$ different colors and that for different
		levels, the color palettes are disjoint. This implies that for each
		group $G_\ell$, we use $(K+\epsilon) 2^\ell \cdot L$ colors.
		Furthermore, by Lemma~\ref{lem:level-data-structure} each level $i$ with
		$i > L \cdot \ell^*$ where
		$\ell^* = \lceil \lg \left( 4 \alpha \right) \rceil$
		satisfies that $V_{i}=\emptyset$.
		Thus, the geometric sum implies that the total number of colors used
		is at most
		\begin{align*}
			\sum_{\ell=0}^{\ell^*} (K+\epsilon) 2^\ell L
			= (K+\epsilon) L \cdot \frac{1-2^{\ell^*+1}}{1 - 2}
			= O(\alpha \lg n).
			&\qedhere
		\end{align*}
	\end{proof}

	The above lemmas imply Theorem~\ref{thm:explicit}.

\section{Dynamic Arboricity Decomposition}
\label{sec:arboricity-decomposition}
	In this section, we present a data structure for maintaining an
	\emph{arboricity decomposition}, i.e., we maintain a partition of the edges
	of a dynamic graph into $O(\alpha)$ edge-disjoint (undirected) forests.
	In particular, we show that any data structure for maintaining an edge
	orientation can be used to maintain such an
	arboricity decomposition, where the number of forests will 
	depend on the maximum outdegree, denoted by $D$ in the sequel.
	We stress that when $D$ depends on some
	parameter (e.g., the arboricity which might increase/decrease after a
	sequence of edge insertions/deletions) then so is the number of forests
	maintained by our data structure.  Using the level data structure from
	Section~\ref{sec:level-data-structure}, this yields that we can maintain an
	arboricity decomposition with $O(\alpha)$ forests if the \emph{current}
	graph has arboricity $\alpha$; the update time is polylogarithmic in $n$.
	We will use this data structure in the next section to give a deterministic
	implicit coloring algorithm.

	For the rest of the section, we assume that we have access to some black box
	data structure that maintains an orientation of the edges with
	update time $T$ for some $T$. We will show how to maintain a set of forests
	$F_0,\dots,F_{2n}$ such that if the maximum outdegree of a node is bounded
	by $D$ (where $D$ which might change over time) the forests
	$F_0,\dots,F_{2D-1}$ provide an arboricity decomposition of the graph and
	the forests $F_{2D},\dots,F_{2n}$ are empty.
	
	\textbf{Initialization and Invariants.}
	We assume that at the beginning we are given a graph with $n$ vertices and
	no edges. For this graph, we build the black box outdegree data structure.
	We initialize $F_0,\dots,F_{2n}$ to $2n$ forests such that each of them contains all vertices $V$
	and no edges. Furthermore, for each vertex $v\in V$ we store an array $A_v$
	storing $n$ bits and initially we set $A_v(i) = 0$ for all $i=0,\dots,n-1$. 
	
	For a vertex $v$, we let $d(v)$ denote the outdegree of $v$ in the black box
	data structure.  When running the data structure, we make sure that the
	following invariants hold for each $v\in V$:
	\begin{enumerate}
		\item\label{inv1-arb} For each $\ell\in\{0,\dots,d(v)-1\}$, either forest
			$F_{2\ell}$ or $F_{2\ell+1}$ but not both contain an out-edge of $v$.
		\item\label{inv2-arb} No out-edge of $v$ is assigned to a forest
			$F_{j}$ with $j\ge 2d(v)$.
		\item\label{inv3-arb} For all $v\in V$ and $\ell\in\{0,\dots,n-1\}$, it
			holds that $A_v(\ell)=1$ iff one of the out-edges of $v$ is assigned to
			forest $F_{2\ell}$ or forest $F_{2\ell+1}$.
	\end{enumerate}
	Observe that when all of the invariants hold, then for each vertex $v$ we
	have that $A_v(\ell)=1$ for $\ell=0,\dots,d(v)-1$ and $A_v(\ell)=0$ for
	$\ell\geq d(v)$. Thus, we have a desired arboricity decomposition. Further
	note that after the initialization of the data structure, all invariants
	hold.

	\textbf{Updates.} Suppose that an edge $\{u,v\}$ is inserted or deleted from
	the graph. We start by inserting or deleting, resp., the edge from the
	black box data structure. Now the black box data structure might either
	(1)~flip the orientation of an existing edge, (2)~add a new out-edge to a
	vertex (due to an edge insertion) or (3)~delete an out-edge of a vertex (due
	to an edge deletion).
	
	Let us start by considering Case~(1), i.e., suppose the black box data
	structure flips the orientation of an edge $\{u',v'\}$. Then we assume
	w.l.o.g.\ that the new orientation is $(u',v')$ and proceed as follows.

	First, we add $(u',v')$ as an out-edge to $u'$.  Let
	$F^*\in\{F_{2d(u')-2},F_{2d(u')-1}\}$ denote the forest in which $v'$ has no
	out-edge (recall that such a forest must exist by Invariant~\ref{inv1-arb}).
	Now we insert the edge $(u',v')$ into $F^*$ and set $A_{u'}(d(u')-1)=1$.  Note
	that after this procedure, all invariants for $u'$ are satisfied.

	Second, we remove the edge $(v',u')$ (with the old orientation) from $v'$.
	Let $\ell^*$ be such that the edge $(v',u')$ was stored in $F_{2\ell^*}$ or
	$F_{2\ell^*+1}$. We remove $(v',u')$ from the corresponding forest and set
	$A_{v'}(\ell^*)=0$. Note that this might violate the invariants because now
	$v'$ has no out-edge in $F_{2\ell^*}$ and $F_{2\ell^*+1}$, but it might have one in
	$F_{2d(v')-2}$ or $F_{2d(v')-1}$ with $d(v')-1 > \ell^*$, where $d(v')$ is the
	outdegree of $v'$ before $(v',u')$ was deleted.  We fix this in the
	next step.

	Third, let $\ell^*$ be as before and set $\ell$ to the largest integer such
	that $A_{v'}(\ell)=1$.  If $\ell < \ell^*$ we do nothing (all invariants
	already hold). Otherwise ($\ell>\ell^*$), we will essentially move the edge
	stored in forest $F_{2\ell}$ or $F_{2\ell+1}$ to forest $F_{2\ell^*}$ or
	$F_{2\ell^*+1}$. More concretely, let $(v',w)$ be the unique out-edge of $v'$
	stored in $F_{2\ell}$ or $F_{2\ell+1}$ and remove $(v',w)$ from this forest.
	Now let $F^*\in\{F_{2\ell^*},F_{2\ell^*+1}\}$ denote the forest in which $w$
	has no out-edge and insert $(v',w)$ into $F^*$. Additionally, set
	$A_{v'}(\ell^*)=1$ and $A_{v'}(\ell)=0$. This restores all invariants for
	$v'$.

	In Case~(2) above, i.e., the black box data structure inserted an out-edge for
	a vertex, we run the first step described above and nothing else. In
	Case~(3), i.e., the black box data structure deleted an out-edge for a
	vertex, we run the second and the third step of the above procedure.

	\textbf{Analysis.}
	First, we show that forests $F_0,\dots,F_{2D-1}$ indeed provide an
	arboricity decomposition of the dynamic graph.
	\begin{lem}
	\label{lem:arb-decomp-properties}
		Let $D$ be the maximum outdegree of any vertex in the outdegree
		decomposition maintained by the black box data structure.
		Then the forests $F_0,\dots,F_{2D-1}$ provide an arboricity
		decomposition of the graph.
	\end{lem}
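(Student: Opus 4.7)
The plan is to establish two parts of the arboricity-decomposition statement: (a) every edge of the current graph $G$ lies in exactly one of $F_0,\dots,F_{2D-1}$, and (b) each $F_j$ is a forest when viewed as an undirected graph. Taken together these give an arboricity decomposition with $2D$ forests.

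I would start with (a), the easier part. By inspection of the update procedure, the algorithm never duplicates an edge: each new out-edge is inserted into a single forest, and in step~3 an edge is only relocated from one forest to another. By Invariant~\ref{inv2-arb}, no out-edge of any vertex $v$ is ever assigned to $F_j$ with $j \geq 2d(v)$, and since $d(v)\le D$ this means nothing lies in $F_j$ for $j \geq 2D$. Hence the forests $F_0,\dots,F_{2D-1}$ together contain every directed out-edge of $G$ exactly once, which corresponds to a partition of the undirected edge set of $G$.

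The main obstacle is (b). The starting point is the structural claim that each vertex has at most one out-edge in any single forest $F_j$: by Invariants~\ref{inv1-arb} and~\ref{inv3-arb}, for every index $\ell \in \{0,\dots,d(v)-1\}$ the pair $\{F_{2\ell},F_{2\ell+1}\}$ contains exactly one out-edge of $v$, lying in exactly one of the two forests, and no out-edge of $v$ is placed at higher indices. Viewed as a directed graph, each $F_j$ therefore has maximum out-degree $1$; its underlying undirected graph is a pseudoforest in which every undirected cycle must in fact be a directed cycle. (In such an undirected cycle of length $k$, the total out-degree contributed by the cycle edges equals $k$, and since each cycle vertex has out-degree at most one in all of $F_j$, every cycle vertex must be the tail of exactly one cycle edge, forcing the cycle to be directed.) Hence it suffices to rule out directed cycles in each $F_j$.

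I would establish directed acyclicity by induction on the update sequence. Initially all forests are empty, so acyclic. Deletions cannot create cycles. The crucial observation is that at every place where the update procedure inserts a directed edge $(a,b)$ into some forest $F^*$ (the first step of Case~(1), the third step of Case~(1) where an edge is moved, and Case~(2)), the forest $F^*$ is deliberately chosen so that the head $b$ has no out-edge in $F^*$ before the insertion; such a choice exists inside each pair $\{F_{2\ell^*},F_{2\ell^*+1}\}$ by Invariant~\ref{inv1-arb} applied to $b$. Because the inserted edge is an out-edge of $a$ and not of $b$, the vertex $b$ remains a sink in $F^*$ after the insertion, so no directed path leaves $b$ and hence no directed cycle through $(a,b)$ can exist. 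Combined with the pseudoforest observation, this yields that each $F_j$ is an undirected forest, completing the proof.
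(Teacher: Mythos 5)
Your proposal is correct and follows essentially the same route as the paper's proof: both show that each vertex has at most one out-edge per forest (so any cycle would have to be directed) and then rule out directed cycles by observing that every edge is inserted into a forest in which its head has no out-edge. Your write-up is somewhat more careful than the paper's in spelling out the induction and in noting that the relocation in step~3 also preserves this property, but the underlying argument is identical.
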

	\begin{proof}
		Due to Invariant~\ref{inv1-arb}, each edge of the graph is stored in
		some forest. Thus, the union of all forests contains all edges of the
		graph.  Hence, to prove the lemma, it suffices to prove the following
		two claims: (1)~For each $\ell=0,\dots,2D-1$, $F_{\ell}$ does not
		contain a cycle.  (2)~If $\ell\ge 2D$ then $F_{\ell}$ does not contain
		any edges.

		We prove Claim~(1) by contradiction. Suppose that $F_\ell$ contains a
		cycle $C$ over $k$ vertices. Since $C$ is cycle, $C$ contains exactly
		$k$ edges.  By Invariant~\ref{inv1-arb}, each vertex has at most one
		out-edge in $F_\ell$. Hence, $C$ must correspond to a directed cycle in
		$F_\ell$. Now consider the edge $(u',v')$ which closed the cycle when it
		was added to $F_\ell$. In the first step of the algorithm, we only added
		the edge $(u',v')$ to $F_\ell$ if $v'$ had no out-edge in $F_\ell$.
		This contradicts the fact that $(u',v')$ closes a directed cycle.

		Claim~(2) follows directly from Invariant~\ref{inv2-arb} and the
		assumption that $D$ is the maximum outdegree of any vertex.
	\end{proof}

	Note that in the above proof we did not assume that $D$ is an upper bound on
	the maximum outdegree over the \emph{entire} sequence of edge insertions and
	deletions. Instead, we only need that $D$ is the maximum outdegree in the
	\emph{current} graph. Hence, the number of forests providing the arboricity
	decomposition will never be more than $2D$ at any point in time even when
	$D$ is changing over time.

	\begin{lem}
		If the (amortized or worst-case) update time of the black box data
		structure is~$T$, then the (amortized or worst-case, resp.) update time
		of the above algorithm is $O(T)$.
	\end{lem}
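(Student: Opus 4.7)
The plan is to charge the algorithm's work to the running time of the black-box outdegree data structure. Each update to our arboricity data structure triggers exactly one call to the black box at cost $T$, which in turn reports a sequence of \emph{orientation events}: edge flips, out-edge insertions, and out-edge deletions. Writing $E$ for the number of such events in a given update, the total post-processing cost is $O(E)$ provided each event is handled in $O(1)$ time. Since every event forces the black box itself to perform $\Omega(1)$ work (updating its own internal state), we have $E = O(T)$ in both the amortized and worst-case sense, so the overall update time is $O(T)$ once the per-event cost is shown to be constant.

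To implement each event in $O(1)$ worst-case time, I would augment the data structure with three pieces of auxiliary information, each maintainable in $O(1)$ per event: (a)~for every oriented edge, a pointer to the forest currently containing it; (b)~for every vertex $v$ and forest index $j$, a pointer to the unique out-edge of $v$ in $F_j$ (or \textsf{null}); and (c)~an explicitly stored outdegree counter $d(v)$ for every vertex $v$. With these in hand, the first step of the flip procedure (placing $(u',v')$ into whichever of $F_{2d(u')-2}, F_{2d(u')-1}$ contains no out-edge of $v'$) reduces to two pointer lookups and a constant number of pointer updates, while the second step (deleting $(v',u')$) simply follows the forest pointer stored with the edge.

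The only subtle step is the third one, which refers to ``the largest $\ell$ with $A_{v'}(\ell)=1$''; a literal implementation would scan the length-$n$ array $A_{v'}$, and this is the main obstacle to a constant-time bound. The key observation is that because all invariants hold immediately before the current update, Invariant~\ref{inv3-arb} implies that prior to the deletion of $(v',u')$ the set $\{\ell : A_{v'}(\ell)=1\}$ is exactly $\{0,\dots,d(v')-1\}$. After setting $A_{v'}(\ell^*)=0$ this set becomes $\{0,\dots,d(v')-1\}\setminus\{\ell^*\}$, so the required largest index is $d(v')-1$ if $\ell^* < d(v')-1$ and $d(v')-2$ otherwise; both cases are read off from the stored counter in $O(1)$. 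The unique out-edge of $v'$ in $F_{2\ell}\cup F_{2\ell+1}$ is then fetched through the pointer from (b), and its reinsertion into $F^*$ again uses one lookup and a constant number of updates. Combining all steps, each event costs $O(1)$ time and the total update time is $T + O(E) = O(T)$, preserving both the amortized and the worst-case guarantees.
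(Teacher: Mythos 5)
Your proposal is correct and follows essentially the same route as the paper: charge the $O(1)$-per-event post-processing to the black box's own work (so the number of orientation events is $O(T)$), and support each event with constant-time pointer lookups. The only cosmetic difference is that the paper explicitly maintains, for each vertex, the largest index $\ell$ with $A_v(\ell)=1$, whereas you derive it from a stored outdegree counter via Invariant~\ref{inv3-arb}; both are valid.
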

	\begin{proof}
		In each update, the algorithm spends time $T$ for inserting or
		deleting, resp., an edge in the black box data structure.
		
		All the other steps can be implemented in $O(1)$ time by maintaining the
		following values:
		(1)~For each $v\in V$, we maintain the maximum index $\ell$ such that
		$A_v(\ell)=1$ (if no such $\ell$ exists we set the corresponding index
		to $-1$). (2)~For each $v$ and $\ell=0,\dots,2n$, we maintain a pointer
		to the copy of $v$ in $F_\ell$.  (3)~For each edge $(u,v)$, we store a
		pointer to the forest $F_\ell$ in which it is currently stored.
		
		Now observe that the first step of the algorithm can be implemented in
		time $O(1)$ as follows: To find $d(u')$, we use the index from~(1). When
		we need to check whether $v'$ has an out-edge in $F_{2d(u')}$ or
		$F_{2d(u')+1}$, we can use the pointers from~(2) to the copies of $v'$
		in $F_{2d(u')}$ and $F_{2d(u')+1}$.

		The second and third step of the algorithm can be implemented similarly.
		When in the second step we have to remove the edge $(v',u')$, we can use
		the pointer from~(3) to find its copy in $O(1)$ time.
	\end{proof}

	The two lemmas above imply Theorem~\ref{thm:arboricity-decomposition}.
	Using the level data structure from Section~\ref{sec:level-data-structure},
	we obtain Corollary~\ref{cor:arboricity-decomposition}.

\section{Implicit Coloring with $2^{O(\alpha)}$ Colors}
\label{sec:Implicit-coloring}
	We present a data structure for implicitly maintaining a
	$2^{O(\alpha)}$-coloring. The data structure has an update time of
	$O(\alpha \lg^3 n)$ and it provides a query operation $\query(u)$ which in time
	$O(\alpha \lg n)$ returns the color of a vertex $u$.
	For planar graphs (which have arboricity at most~$3$) this implies that we
	can maintain an $O(1)$-coloring with update time $O(\lg^3 n)$ and query time
	$O(\lg n)$.

	Our algorithm maintains the arboricity data structure of
	Corollary~\ref{cor:arboricity-decomposition} together with a data structure
	maintaining the forests of the arboricity decomposition. The latter assigns
	a unique root to each tree in the forests.  Our main observation is that for
	any two adjacent vertices $u$ and $v$, there is a tree such that the
	distances of $u$ and $v$ to the root of have \emph{different parity}.  Now
	the query operation for a vertex $u$ picks the color of $u$ based on the
	parities of $u$'s distances to the roots of the trees.

	\textbf{Initialization.}
	We assume that initially the graph has $n$ vertices and no edges.  For this
	graph, we build the arboricity decomposition presented in
	Corollary~\ref{cor:arboricity-decomposition}. Furthermore, each of the
	forests maintained by the arboricity data structure is equipped with the
	data structure from the following lemma.
	\begin{lem}
	\label{lem:top-trees-corollary}
		There exists a data structure for maintaining a
		dynamic forest with the following properties:
		\begin{itemize}
		\item Inserting an edge $\{u,v\}$ into the forest can be done in $O(\lg n)$ time, where
			$u$ and $v$ are in different trees before the edge insertion.
		\item Deleting an edge $\{u,v\}$ from the forest takes $O(\lg n)$ time.
		\item The data structure assigns a unique root to each tree in the
			forest.
		\item For a given vertex $u$, the distance of $u$ to the root of the
			tree containing $u$ can be reported in time $O(\lg n)$.
		\end{itemize}
	\end{lem}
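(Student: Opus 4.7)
The plan is to invoke a standard dynamic forest data structure such as the top trees of Alstrup, Holm, de Lichtenberg and Thorup, or Sleator--Tarjan link-cut trees augmented with path weights. Both maintain a dynamic forest under \textsf{link} and \textsf{cut} operations in $O(\lg n)$ time per operation. First I would represent the forest by such a structure; since the underlying forest is unrooted while link-cut trees are rooted, I would additionally make use of the standard \textsf{evert}$(u)$ operation, which reroots the tree containing $u$ at $u$ in $O(\lg n)$ time.

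To assign a unique root to each tree, I would simply use the designated root of the corresponding link-cut tree. On insertion of an edge $\{u,v\}$ with $u$ and $v$ in different trees, I call \textsf{evert}$(u)$ and then \textsf{link} $u$ as a child of $v$; the root of $v$'s tree then becomes the root of the merged tree. On deletion of $\{u,v\}$ the edge is \textsf{cut} in $O(\lg n)$ time, and the two resulting trees naturally inherit their link-cut-tree roots (performing one further \textsf{evert} if necessary to ensure each component again has a designated root).

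For distance-to-root queries I would attach unit weight to each edge and apply the standard path-sum augmentation: performing \textsf{expose}$(u)$ brings the entire root-to-$u$ path into a single auxiliary splay tree, and by maintaining subtree sums of edge weights in the auxiliary splay trees the depth of $u$ is returned in $O(\lg n)$ amortized time. The main obstacle is essentially presentational rather than technical: every required bullet point follows immediately from well-known results on link-cut trees or top trees, so the cleanest proof is to invoke one of these data structures as a black box and verify that our API of insert/delete/root/distance maps directly onto its primitives.
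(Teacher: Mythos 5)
Your proof is correct, but it takes a different route from the paper's. The paper invokes the top-tree interface of Alstrup et al.\ with \textsf{mark}/\textsf{unmark} and \textsf{nearestMarkedNeighbor}: it maintains the invariant that each tree contains exactly one marked vertex (the root), fixes the invariant after a \textsf{link} by unmarking one of the two old roots and after a \textsf{cut} by marking a vertex in whichever component lost its root, and answers distance-to-root queries directly as the distance to the nearest marked vertex. You instead exploit the intrinsic rootedness of link-cut trees, using \textsf{evert} to control which endpoint becomes the root of a merged tree and the standard unit-edge-weight path-sum augmentation under \textsf{expose} to report depth. Both are legitimate instantiations of standard dynamic-forest machinery; the paper's version avoids \textsf{evert} and path aggregation entirely at the cost of a small amount of explicit root bookkeeping, while yours leans on slightly heavier (but equally classical) primitives and gets root maintenance almost for free, since after a \textsf{cut} both components already carry designated roots. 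Two minor points: the extra \textsf{evert} you hedge about after a deletion is never needed, and your distance bound is amortized $O(\lg n)$ with splay-based link-cut trees (worst-case with biased search trees or top trees), which is all the paper's application requires.
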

	The lemma is a simple application of dynamic trees or top trees~\cite{alstrup05maintaining}
	and we prove it in Appendix~\ref{sec:proof-top-trees-corollary}.

	\textbf{Updates.}
	Suppose an edge $\{u,v\}$ is inserted or deleted from the graph. Then we
	proceed as follows. First, we insert or delete, resp., the edge in the data
	structure from Corollary~\ref{cor:arboricity-decomposition}. Second, whenever
	the arboricity decomposition inserts or deletes an edge in one of the
	forests, we insert or delete the edge in the corresponding forest of the
	data structure from Lemma~\ref{lem:top-trees-corollary}.

	\textbf{Queries.}
	When we receive a query $\query(u)$ for the color of a vertex $u$, we
	proceed as follows.  For each of the $r = O(\alpha)$ forests we identify the
	tree containing $u$. Let $T_{1},\dots,T_r$ denote these trees. For each
	$T_{j}$, we determine whether the distance of $u$ to the root of $T_{j}$
	using the data structure from Lemma~\ref{lem:top-trees-corollary}. Now for
	each $j=1,\dots,r$, we set $p_{u}(j):=0$ if the distance has even parity
	and $p_{u}(j):=1$ otherwise. Now let
	$p_{u}:=(p_{u}(1),\dots,p_{u}(r))\in\{0,1\}^r$.
	We define the color of $u$ to be $p_{u}$.
	
	\textbf{Analysis.}
	We start by showing that indeed we obtain a $2^{O(\alpha)}$-coloring.

	\begin{lem}
		The data structure maintains an implicit $2^{O(\alpha)}$-coloring.
	\end{lem}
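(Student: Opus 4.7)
The plan is to verify both the color bound and the properness of the coloring, then check the consistency requirement for implicit colorings. Since the color assigned by $\query(u)$ is the bit vector $p_u \in \{0,1\}^r$ with $r = O(\alpha)$, the number of distinct colors that can ever be returned is at most $2^r = 2^{O(\alpha)}$. This immediately yields the color count.

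For properness, I would argue as follows. Fix any edge $\{u,v\}$ currently in $G$. By Corollary~\ref{cor:arboricity-decomposition}, this edge belongs to exactly one forest $F_j$ of the arboricity decomposition, so $u$ and $v$ lie in the same tree $T_j$ and are adjacent in it. Since $T_j$ is a tree with a distinguished root (by Lemma~\ref{lem:top-trees-corollary}), the unique $T_j$-path from one of $u,v$ to the root passes through the other; equivalently, the depths of $u$ and $v$ in $T_j$ differ by exactly $1$. Therefore these two depths have opposite parities, so $p_u(j) \neq p_v(j)$, and hence $p_u \neq p_v$. Since this holds for every edge of $G$, the mapping $u \mapsto p_u$ is a proper coloring.

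Finally, I would verify the consistency requirement from the definition of an implicit coloring: a sequence of query operations $\query(v_1),\dots,\query(v_k)$ not interrupted by an update must agree with a proper coloring. Between updates, neither the arboricity decomposition nor the underlying top-tree structures from Lemma~\ref{lem:top-trees-corollary} change, and the root chosen for each tree is fixed. Thus every query invocation is deterministic and returns the same vector $p_u$ for the same vertex $u$, and the properness argument above applies to the snapshot of $G$ seen by these queries.

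I don't anticipate a real obstacle here: the argument is essentially the observation that adjacent vertices in a rooted tree have depths of opposite parity, which is a one-line fact. The only mild care needed is to make sure we work with the current value of $r$ that is fixed between two updates (so that the bit-vectors compared across a pair of consecutive queries have the same length and indexing), which follows because the arboricity decomposition and the number of nonempty forests only change during updates.
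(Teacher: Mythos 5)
Your proof is correct and takes essentially the same approach as the paper: locate the forest containing the edge, observe that adjacent vertices in a rooted tree have depths of opposite parity, and conclude the bit vectors differ at that coordinate while the total color count is $2^r = 2^{O(\alpha)}$. The extra remarks you add about the consistency requirement and about $r$ being stable between updates are sound sanity checks but do not change the core argument.
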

	\begin{proof}
		Consider any edge $\{u,v\}$. We show that query procedure returns
		vectors $p_{u}$ and $p_{v}$ such that $p_u \neq p_v$. Indeed, the edge
		$\{u,v\}$ must be contained in one of the $r=O(\alpha)$ forests
		maintained by the arboricity decomposition.  Therefore, $u$ and $v$ are
		adjacent in some tree $T_{j}$ of that forest. Since $T_j$ has a unique
		root (by Lemma~\ref{lem:top-trees-corollary}), the distances of $u$ and
		$v$ to the root of $T_{j}$ must have a different parity. Thus, we obtain
		that $p_{u}(j)\ne p_{v}(j)$ and, hence, $p_{u}\ne p_{v}$.
		
		Furthermore, the total number of used colors is $2^r = 2^{O(\alpha)}$
		since there are only $2^r$ possibilities for each vector $p_{u}$.
	\end{proof}

	\begin{lem}
		The amortized update time of the data structure is $O(\lg^3 n)$.
		The query time of the algorithm is $O(\alpha \lg n)$.
	\end{lem}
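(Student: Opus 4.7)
The plan is to bound the two quantities separately, tracking how a single graph update propagates through the three layers of the construction: the arboricity decomposition of Corollary~\ref{cor:arboricity-decomposition}, and then the dynamic-forest data structure of Lemma~\ref{lem:top-trees-corollary} sitting on top of each of the $O(\alpha)$ forests.

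For the update time, I would start from Corollary~\ref{cor:arboricity-decomposition}, which gives amortized update time $O(\lg^2 n)$ for the arboricity decomposition. However, this only accounts for its internal bookkeeping and does not directly bound the number of forest-edge insertions and deletions that must be mirrored into the data structure of Lemma~\ref{lem:top-trees-corollary}. The key step is to count these mirrored operations. Inspecting the reduction in Section~\ref{sec:arboricity-decomposition}, each operation performed by the underlying black-box orientation data structure (an edge insertion, deletion, or orientation flip) causes $O(1)$ edge insertions and deletions in the forests $F_0,\dots,F_{2n}$. Combining this with Lemma~\ref{lem:level-data-structure}\eqref{enu:update-time}, which guarantees that a single graph update causes only $O(\lg^2 n)$ orientation changes in the level data structure (amortized), we conclude that at most $O(\lg^2 n)$ forest-edge insertions and deletions are triggered per graph update, again amortized. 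Each of these is served by the top-tree data structure in $O(\lg n)$ time per Lemma~\ref{lem:top-trees-corollary}, for a total amortized cost of $O(\lg^3 n)$ on top of the $O(\lg^2 n)$ spent inside the arboricity decomposition. The sum is $O(\lg^3 n)$, as claimed.

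For the query time, I would simply follow the query procedure. Given $u$, we consult each of the $r = O(\alpha)$ forests maintained by the arboricity decomposition, locate the tree containing $u$, and use Lemma~\ref{lem:top-trees-corollary} to report the distance from $u$ to the root of that tree in $O(\lg n)$ time. We extract a single parity bit and assemble the vector $p_u$. This gives a total query time of $O(\alpha \lg n)$. Note that the number of forests to inspect is the \emph{current} value $r = O(\alpha)$, which is available in $O(1)$ time via Lemma~\ref{lem:level-data-structure}\eqref{item:approx-arboricity}, so we do not waste time querying empty forests.

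The main subtlety to get right is the update-time analysis: one must not simply multiply the $O(\lg n)$ per-forest cost by $O(\alpha)$ (the number of forests), but instead observe that a graph update only touches a bounded-on-average number of forest edges, namely $O(\lg^2 n)$ amortized, as dictated by the level data structure. This is where the $\lg^2 n$ blows up into $\lg^3 n$ rather than into $\alpha \lg^2 n$, and it is the only nontrivial accounting step in the proof.
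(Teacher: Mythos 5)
Your proposal is correct and follows essentially the same route as the paper: bound the number of forest-edge insertions/deletions triggered per graph update by $O(\lg^2 n)$ amortized (the paper derives this directly from the $O(\lg^2 n)$ amortized update time of Corollary~\ref{cor:arboricity-decomposition}, while you trace it slightly more explicitly through the $O(\lg^2 n)$ amortized orientation flips of Lemma~\ref{lem:level-data-structure}), multiply by the $O(\lg n)$ cost per top-tree operation, and for queries multiply the $O(\alpha)$ forests by the $O(\lg n)$ root-distance query. The accounting subtlety you flag at the end is exactly the point the paper's proof rests on.
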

	\begin{proof}
		First, note that the amortized update time of the data structure from
		Corollary~\ref{cor:arboricity-decomposition} is $O(\lg^2 n)$. This
		implies that amortized per update, the arboricity decomposition inserts
		or deletes at most $O(\lg^2 n)$ edges from the forests. For each such
		inserted or deleted edge it takes time $O(\lg n)$ to update the edge in
		the data structure from Lemma~\ref{lem:top-trees-corollary}. This gives
		that the total amortized update time is $O(\lg^3 n)$.

		When answering a query for a vertex $u$, for each of the $O(\alpha)$
		trees containing $u$ we need to query the distance of $u$ to the root
		node of the tree. Each of these queries takes time $O(\lg n)$ by
		Lemma~\ref{lem:top-trees-corollary}. Hence, the total query time is
		$O(\alpha \lg n)$.
	\end{proof}

	The two lemmas above imply Theorem~\ref{thm:implicit1}.

\section{Implicit Coloring with $O(\alpha\lg n\cdot \min\{1, \lg \alpha /\log\log n\})$ Colors}
\label{sec:implicit-levels}
	We present a data structure maintaining an implicit
	$O(\alpha \lg n \cdot \min\{1, \log\alpha/\log\log n \})$-coloring. The
	data structure has an update time of $O(\lg^2 n)$ and a query time of $O(\lg n)$.

	We will now focus on the case that $\alpha \leq \frac{1}{100}\lg n$ and
	provide an algorithm maintaining a $O(\alpha \lg n \log\alpha/\log\log n)$-coloring;
	we will only come back to the case $\alpha > \frac{1}{100}\lg n$ at
	the very end of the section when we prove Theorem~\ref{thm:implicit2}.
	To obtain the result for $\alpha \leq \frac{1}{100}\lg n$, we use the level data structure described in
	Section~\ref{sec:level-data-structure} and an idea similar to that of
	Section~\ref{sec:Explicit-coloring}.  Recall that in
	Section~\ref{sec:Explicit-coloring} we used disjoint color palettes of
	$O(2^\ell)$ colors for each level in group $G_\ell$. Thus, for
	all levels in $G_\ell$ we used $O(2^\ell \cdot \lg n)$ colors
	in total. Now we improve upon this result by providing a query procedure
	which only uses $O(2^\ell \cdot \lg\alpha \lg n / \lg \lg n)$ colors per
	group $G_\ell$.  More concretely, we will partition the group $G_\ell$ into
	$O(\lg\alpha \lg n / \lg \lg n)$ subgroups $S_{\ell,j}$ such that for each subgroup the query
	procedure only uses $O(2^\ell)$ colors.

	\textbf{Subgroups.}
	Recall from Section~\ref{sec:level-data-structure} that the level data
	structure contains $O(\lg^2 n)$ levels and that group $G_\ell$ contains the
	$L$ levels $G_\ell = \left\{ \ell L + 1,\dots,(\ell+1)L\right\}$, where
	$L=2+\lceil\log n\rceil$.
	Now let $\alpha^*$ be an approximation of the arboricity of the graph with
	$\alpha\leq\alpha^*\leq 10\alpha$. We partition each group $G_\ell$ into
	\emph{subgroups $S_{\ell,j}$} of
	$J:=\left\lfloor \log\log n/\log\alpha^*\right\rfloor$ consecutive levels
	each. Formally, for each $\ell\in\N_{0}$ and for each $j\in\{0,\dots,J-1\}$,
	we define that subgroup $S_{\ell,j}$ contains the levels
	$\left\{ \ell L+j\cdot J,\dots,\ell L+(j+1)\cdot J-1\right\}$.  
	Thus, there are $O(\lg n \lg \alpha^* / \lg\lg n)$ subgroups $S_{\ell,j}$ per group
	$G_\ell$.

	Note that $J$ depends on $\alpha^*$ which is an approximation of the
	\emph{current} arboricity of the graph. This implies that as the arboricity
	of the graph changes (due to edge insertions and deletions), the
	subgroups $S_{\ell,j}$ will also change. However, note that the groups
	$G_\ell$ are not affected by this. Also, the algorithm will not need to
	maintain the subgroups $S_{\ell,j}$ explicitly. Instead, it will be
	enough if the algorithm can compute $J$ to check for a given level $i$ in
	which subgroup the level is contained. Later, whenever we need to compute the
	subgroup of a level $i$, we can assume that we know a suitable value for
	$\alpha^*$ and, hence, $J$ with the desired properties via
	Property~\ref{item:approx-arboricity} of
	Lemma~\ref{lem:level-data-structure}.

	Furthermore, to each subgroup $S_{\ell,j}$ we assign a new color palette
	with $(K+\epsilon)2^{\ell}$ colors, where $K$ is the constant
	from Lemma~\ref{lem:level-data-structure} and $\varepsilon = 1/10$. In particular, the palettes for any two
	different subgroups are disjoint.

	\textbf{Initialization.}
	As before, we assume that initially we are given a graph over $n$ vertices
	and without any edges. For this graph we build the level data structure from
	Section~\ref{sec:level-data-structure}.
	We also maintain a counter $t$ which counts the number of edge insertions
	and deletions processed by the data structure, but $t$ does \emph{not} count the
	number of queries processed. Initially, we set $t=0$.  Furthermore, for
	each vertex $v$ we store a pair $(c,t)$ consisting of its color $c$ as well
	as the last time $t$ when its color was last updated. We only store the most recent such
	pair, i.e., when $v$ is assigned a color at time $t$ and there already
	exists a pair $(c',t')$ for $v$ with $t'<t$, we delete the old pair
	$(c',t')$.  At the beginning, we initialize the pairs of all vertices to
	$(0,0)$, indicating that we assigned color 0 after having seen 0 updates. If
	at some time $t$ for a vertex $v$ we store a pair $(c,t')$
	with $t'<t$ then we say that $v$ is \emph{outdated}, otherwise we say that
	$v$ is \emph{fresh}.
	
	\textbf{Updates.} Suppose that an edge is inserted or deleted. Then we
	insert or delete, resp., the edge in the level data structure and update it
	suitably (but do not change the colors stored for the vertices). Also, we
	increase the counter~$t$.
	
	\textbf{Queries.}
	Suppose that the color of a vertex $v$ in a level $i$ is queried at time
	$t$. Then we set $\ell$ and $j$ such that level $i$ is in group $G_\ell$
	and subgroup $S_{\ell,j}$. If $v$ is fresh, we output the color stored for
	$v$. If $v$ is outdated, we recompute the color of $v$ as follows.  First,
	we iterate over $\friends(v,\geq i)$ and find subsets $V'$ and $V''$ which
	are as follows: $V'$ contains the all neighbors of $v$ in some level $i'$
	with $i'>i$ and $i'\in S_{\ell,j}$, $V''$ contains all neighbors of $v$ in
	level $i$.  Second, for each $v'\in V'$ that is outdated, we recursively
	recompute the color for $v'$. Note that after this step all vertices in $V'$
	are fresh.  Third, let $V'''$ be the set of all vertices in $V''$ that are
	fresh.  Observe that $\left|V'\cup V'''\right|\le K \cdot 2^{\ell}$ (by
	Invariant~\ref{inv1} of the level data structure) while the color palette of
	subgroup $S_{\ell,j}$ has $(K+\epsilon)2^{\ell}$ colors.
	Hence, there at least $\varepsilon 2^\ell$ colors which are
	not used by any vertex in $V'\cup V'''$ and we pick one of those and assign
	it to $v$.  Furthermore, we update the pair $(c,t)$ for vertex $v$.
	
	\textbf{Analysis.}
	In the next lemma we show that the coloring assigned to fresh vertices is
	proper. Note that it is enough to prove the claim for fresh vertices: the
	query routine only needs to provide a proper coloring as long as queries are
	not interrupted by an update, thus we only need to consider fresh vertices
	since only fresh vertices are assigned pairs with the most recent timestamp.
	\begin{lem}
	\label{lem:implicit-coloring-proper}
		Let $e=\{u,v\}$ be an edge. Suppose that $u$ and $v$ are fresh and that
		$u$ has color $c_u$ and $v$ has color $c_v$.  Then $c_u \ne c_v$.
	\end{lem}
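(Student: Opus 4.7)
The plan is to case-split on whether $u$ and $v$ belong to the same subgroup, and within that, on whether they sit at the same level. Let $i_u, i_v$ denote the levels of $u$ and $v$ in the hierarchy, and let $S_{\ell_u,j_u}, S_{\ell_v,j_v}$ be the subgroups containing $i_u, i_v$ respectively. If $(\ell_u,j_u) \neq (\ell_v,j_v)$, the subgroups have been assigned disjoint color palettes, so $c_u \neq c_v$ is immediate. So the interesting case is when $u$ and $v$ lie in the same subgroup $S_{\ell,j}$.

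Within the same subgroup, WLOG assume $i_u \leq i_v$. Within the current query sequence (i.e.\ since the most recent update that set the current timestamp $t$), both $u$ and $v$ had their colors finalized at some steps $s_u, s_v$. A key auxiliary observation to use throughout is that once a vertex becomes fresh at some step $s$, it remains fresh (with an unchanged color) until the next update, because the query procedure only modifies colors of outdated vertices. If $i_u = i_v$, I would assume WLOG $s_u < s_v$. Then at step $s_v$, $u$ is already fresh, $u \in V''$ as a same-level neighbor, and consequently $u \in V'''$. The query procedure for $v$ picks $c_v$ from the palette of $S_{\ell,j}$ avoiding all colors appearing in $V' \cup V'''$, so $c_v \neq c_u$.

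The subtler subcase is $i_u < i_v$ (still same subgroup), because $v$'s query procedure only looks at $\friends(v, \geq i_v)$, which does not contain $u$: thus $v$ is never ``aware'' of $u$ when its color is set. The fix is to argue from the other side, looking at step $s_u$ instead. Since $\{u,v\}$ is an edge with $i_v > i_u$, $v$ sits in $\friends(u,\geq i_u)$, and since $i_v \in S_{\ell,j}$ with $i_v > i_u$, we have $v \in V'$ when $u$'s color is computed. The query procedure then explicitly freshens every vertex in $V'$ (recursing on $v$ if it was outdated) before picking $c_u$ to avoid all colors of $V' \cup V'''$. This forces $c_u$ to differ from the color $v$ holds at step $s_u$. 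Combined with the persistence-of-freshness observation, that color is exactly the current $c_v$, giving $c_u \neq c_v$.

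The main obstacle is exactly this asymmetric-levels subcase: one must be careful that the recursive freshening performed during $u$'s query really does ``synchronize'' with $v$'s current color. This is where invoking the fact that no update has happened between $s_u$ and now is essential — it rules out any intermediate re-coloring of $v$ that would break the agreement between ``$v$'s color at step $s_u$'' and ``$v$'s color now.'' No delicate counting or probabilistic argument seems needed; the proof is structural and should fit in a short case analysis.
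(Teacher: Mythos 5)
Your proposal is correct and follows essentially the same case analysis as the paper's proof (disjoint palettes across subgroups; $u \in V'''$ for the same-level case; the lower-level endpoint placing the higher-level one in $V'$ for the asymmetric case). You make explicit the persistence-of-freshness observation that the paper leaves implicit, but the underlying argument is identical.
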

	\begin{proof}
		Let $i_u$ denote the level of $u$ and let $i_v$ denote the level of $v$.
		Let $(\ell_u,j_u)$ and $(\ell_v,j_v)$ be such that $i_u\in
		S_{\ell_u,j_u}$ and $i_v\in S_{\ell_v,j_v}$. If $u$ and $v$ are from
		different subgroups (i.e., $(\ell_u,j_u)\neq(\ell_v,j_v)$), then we must
		have that $c_u \neq c_v$ since we used disjoint color palettes for
		different subgroups. Now suppose that $u$ and $v$ are from the same
		subgroup (i.e., $(\ell_u,j_u)=(\ell_v,j_v)$).
		We distinguish two cases. First, suppose that $i_u = i_v$. Then
		assume w.l.o.g.\ that $v$ received its color after~$u$. Thus, $u$ was in
		the set $V'''$ when $v$ was colored and, hence, we must have that
		$c_u\neq c_v$.  Second, suppose that $i_u \neq i_v$. W.l.o.g.\ assume
		that $i_u>i_v$. Then $u$ was contained in the set $V'$ when $v$ received
		its color. Hence, we must have that $c_u\neq c_v$.
	\end{proof}

	\begin{lem}
		The algorithm maintains a $O(\alpha \lg \alpha \lg n / \lg \lg n)$-coloring.
	\end{lem}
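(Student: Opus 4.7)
The plan is to bound the total number of colors by summing the palette sizes over all subgroups, using the same geometric series idea as in Lemma~\ref{lem:explicit-colors} but with the improved per-group bound coming from the subgroup decomposition.

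First I would observe that by Property~\ref{enu:empty-groups} of Lemma~\ref{lem:level-data-structure}, only the groups $G_0,\dots,G_{\ell^*}$ with $\ell^* = \lceil \lg(4\alpha)\rceil$ can contain non-empty levels; for all higher groups every level is empty, and hence no color from their palettes is ever assigned. So it is enough to count the colors used by the subgroups inside $G_0,\dots,G_{\ell^*}$.

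Next I would count the colors per group $G_\ell$. Each group consists of $L = 2 + \lceil \lg n\rceil$ levels and is partitioned into subgroups of $J = \lfloor \lg\lg n/\lg\alpha^*\rfloor$ consecutive levels, so it contains $O(L/J) = O(\lg n \cdot \lg \alpha^* / \lg\lg n)$ subgroups. Since each subgroup in $G_\ell$ has a color palette of $(K+\epsilon)2^\ell = O(2^\ell)$ colors and the palettes across distinct subgroups are disjoint, the total number of colors appearing in group $G_\ell$ is $O(2^\ell \cdot \lg n \cdot \lg\alpha^*/\lg\lg n)$.

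Finally, I would sum over $\ell = 0,\dots,\ell^*$ and use the geometric series $\sum_{\ell=0}^{\ell^*} 2^\ell = O(2^{\ell^*}) = O(\alpha)$, together with $\alpha^* \le 10\alpha$ so that $\lg\alpha^* = O(\lg\alpha)$, to conclude that the total number of colors used across all subgroups is
\[
    \sum_{\ell=0}^{\ell^*} O\!\left( 2^\ell \cdot \frac{\lg n \lg \alpha^*}{\lg\lg n}\right) = O\!\left( \alpha \lg n \cdot \frac{\lg \alpha}{\lg\lg n}\right),
\]
which is the claimed bound. The argument is essentially a routine geometric sum; the only subtlety is to use Property~\ref{enu:empty-groups} so that the sum truly terminates at $\ell^*$ (and not at $O(\lg n)$), which is exactly what turns the factor $\lg n$ from summing over all groups into the factor $\alpha$ appearing in front. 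Lemma~\ref{lem:implicit-coloring-proper} has already established properness of the coloring returned by the query routine, so combined with this color-count bound the lemma follows.
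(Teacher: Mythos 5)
Your proposal is correct and follows essentially the same route as the paper: count $O(\lg n \cdot \lg\alpha/\lg\lg n)$ disjoint palettes of $O(2^\ell)$ colors per group, truncate at $\ell^*=\lceil\lg(4\alpha)\rceil$ via the empty-groups property, and close with the geometric sum, with properness supplied by Lemma~\ref{lem:implicit-coloring-proper}. The only difference is that you spell out the geometric-sum step that the paper delegates to the proof of Lemma~\ref{lem:explicit-colors}, and you explicitly note $\lg\alpha^*=O(\lg\alpha)$, which the paper leaves implicit.
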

	\begin{proof}
		We already showed in Lemma~\ref{lem:implicit-coloring-proper} that the
		obtained coloring is proper. It only remains to bound the number of
		colors used by the algorithm.
		First, observe that since each group $G_\ell$ consists of $L=O(\lg n)$
		levels and each subgroup $S_{\ell,j}$ contains $O(\lg\lg n / \lg\alpha)$
		levels, there are $O(\lg n \lg\alpha / \lg\lg n)$ subgroups per group.
		Additionally, for each subgroup $S_{\ell,j}$ we assigned a color palette
		of $(K+\varepsilon)2^\ell$ colors. Thus, the total number
		of colors used per group is 
		$O(2^{\ell} \cdot \lg n \lg\alpha / \lg\lg n)$.
		Now the same geometric sum argument as used in the proof of
		Lemma~\ref{lem:explicit-colors} yields that the data structure uses
		$O(\alpha \lg\alpha \lg n / \lg\lg n)$ colors in total.
	\end{proof}

	Next, we analyze the update and query time of the algorithm. We show that it
	takes $O(\lg n)$ time to query the color of a vertex~$v$. This includes all
	necessary recursive computations for outdated vertices.
	\begin{lem}
		The update time of the algorithm is $O(\lg^2 n)$.
		Furthermore, the query time of the algorithm is $O(\lg n)$.
	\end{lem}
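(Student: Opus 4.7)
The plan is to bound the update and query costs separately: the update bound will fall out of the level data structure, while the query bound requires an analysis of the recursive structure of $\query$.

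\textbf{Update time.} An update only (i) forwards the edge insertion or deletion to the level data structure and (ii) increments the counter $t$. In particular, no vertex is recolored at update time; colors are lazily recomputed inside $\query$. Hence, by Property~\ref{enu:update-time} of Lemma~\ref{lem:level-data-structure}, the amortized update time is $O(\lg^{2} n)$.

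\textbf{Query time.} My plan is to bound the cost of $\query(v)$ by analyzing the recursion tree that it implicitly traverses. Say $v$ currently lies at level $i \in S_{\ell,j}$. Every recursive call targets a vertex in $V'$, which by construction lies at a strictly higher level of the \emph{same} subgroup $S_{\ell,j}$; hence the recursion depth is at most the number of levels in a subgroup, namely $J = \lfloor \log\log n/\log\alpha^{*} \rfloor$. At each recursive call the branching factor is $|V'| \leq 5\cdot 2^{\ell}$ by Invariant~\ref{inv1}. The non-recursive work at each call (traversing $\friends(\cdot,\ge\cdot)$, assembling $V'$, $V''$ and $V'''$, and picking an unused color out of the size-$(K+\varepsilon)\,2^{\ell}$ palette by marking the colors occupied by $V'\cup V'''$) runs in $O(2^{\ell})$ time.

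Multiplying the per-node work by the size of the recursion tree, the total query cost is
\begin{equation*}
    O\!\left(2^{\ell}\right)\cdot O\!\left((5\cdot 2^{\ell})^{J}\right) \; = \; 2^{O(\ell\,J)}.
\end{equation*}
By Property~\ref{enu:empty-groups} of Lemma~\ref{lem:level-data-structure} together with $\alpha \leq \alpha^{*} \leq 10\alpha$, we have $\ell \leq \ell^{*} = O(\log \alpha^{*})$, and combining this with $J \leq \log\log n / \log\alpha^{*}$ gives $\ell\,J = O(\log\log n)$. Hence the total query time is $2^{O(\log\log n)} = O(\log n)$, as claimed.

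The main delicate point is the query analysis: one must simultaneously control the recursion \emph{depth} (arguing it is at most one subgroup size because recursion only ascends inside a subgroup) and the \emph{branching factor} (controlled via Invariant~\ref{inv1}). The choice of $J=\log\log n / \log\alpha^{*}$ is precisely tuned so that the product $(2^{\ell})^{J}$ collapses to $O(\log n)$; without that balance, the recursion tree could explode beyond the desired polylogarithmic bound.
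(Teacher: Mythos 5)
Your proof is correct and follows essentially the same route as the paper: the update bound is delegated to the level data structure, and the query bound comes from multiplying the recursion depth (at most $J$, since recursive calls only ascend to strictly higher levels within the same subgroup) by the branching factor from Invariant~\ref{inv1}, which the paper phrases equivalently as an induction on the level showing cost $O(\alpha^{i_{\max}-i+1})$. One small quibble with your last step: $2^{O(\log\log n)}$ is in general only $\mathrm{polylog}(n)$ rather than $O(\log n)$, so it is cleaner to compute $\alpha^{J}\le \alpha^{\lg\lg n/\lg\alpha}=2^{\lg\lg n}=\lg n$ exactly as the paper does --- although the paper's own induction likewise silently absorbs a $2^{O(J)}$ factor coming from the constants in Invariant~\ref{inv1}, so this imprecision is shared rather than introduced by you.
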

	\begin{proof}
		Since the update procedure of our algorithm only updates the level data
		structure and increases the counter $t$, the update time is $O(\lg^2 n)$
		by Lemma~\ref{lem:level-data-structure}.

		Now let us analyze the query time of the algorithm.
		Consider any subgroup $S_{\ell,j}$ and let $i_{\max}$ be the largest
		level in $S_{\ell,j}$, i.e.,
		$i_{\max}=\ell L+(j+1)\cdot J-1$.
		We prove by induction on the level $i\in S_{\ell,j}$ that for any vertex at level $i$
		the query time is $O(\alpha^{i_{\max}-i+1})$.
		As the algorithm only recolors vertices in $S_{\ell,j}$, we do not have
		to consider any other levels.

		As base case suppose that $i=i_{\max}$. Then we have that $V'=\emptyset$.
		Furthermore, the computation of $V''$ of $V'''$ can be performed in time
		$O(2^\ell) = O(\alpha)$ since $\ell \leq \ell^*$ (see
		Lemma~\ref{lem:level-data-structure}).

		Next, consider a vertex at level $i\in S_{\ell,j}$ with $i<i_{\max}$.
		Then by Invariant~\ref{inv1} of the level data structure, we have that
		the set $V'$ contains at most $O(2^{\ell})=O(\alpha)$
		vertices at levels $i+1,\dots,i_{\max}$. By induction hypothesis,
		coloring each of these vertices takes time
		$O(\alpha^{i_{\max}-(i+1)+1})=O(\alpha^{i_{\max}-i})$. Thus, coloring
		all of these vertices takes time $O(\alpha^{i_{\max}-i+1})$.
		Computing the colors of vertices at level $i$ by computing the sets
		$V''$ and $V'''$ takes time $O(\alpha)$ by the same arguments as in the
		base case. Thus, the total query time for this vertex is
		$O(\alpha^{i_{\max}-i+1})$.

		Now let us bound the total query time. Note that difference of
		$i_{\max}-i$ for $i\in S_{\ell,j}$ is maximized when
		$i=\ell L + j \cdot J$ and in this case $i_{\max}-i=J$. Thus, the total
		query time is at most
		\begin{align*}
			O(\alpha^{i_{\max}-i})
			= O(\alpha^{J})
			\leq O(\alpha^{\lg \lg n / \lg \alpha^*})
			\leq O(\alpha^{\lg \lg n / \lg \alpha})
			= O(2^{\lg \lg n})
			= O(\lg n).
			&\qedhere
		\end{align*}
	\end{proof}

	\begin{proof}[Proof of Theorem~\ref{thm:implicit2}]
		To obtain the data structure claimed in the theorem, we run the above
		algorithm and the explicit algorithm from Theorem~\ref{thm:explicit} in
		parallel.
		After each update, we use Property~\ref{item:approx-arboricity} of
		the level data structure to obtain an approximation $\alpha^*$ of the
		arboricity with $\alpha \leq \alpha^* \leq 10\alpha$. If
		$\alpha^*\leq\frac{1}{10}\lg n$, then we will use the data structure
		from this section for subsequent queries. The previous lemmas imply that
		this provides a $O(\alpha \lg \alpha \lg n / \lg \lg n)$-coloring with
		amortized update time $O(\lg^2 n)$ and query time $O(\lg n)$. If
		$\alpha^*>\frac{1}{10}\lg n$, we use the data structure from
		Theorem~\ref{thm:explicit} for queries. This provides an
		$O(\alpha \lg n)$-coloring with amortized update time $O(\lg^2 n)$ and
		query time $O(1)$ because the coloring maintained by the data structure
		is explicit.
	\end{proof}

\bibliographystyle{abbrv}
\bibliography{dynamic-coloring}

\begin{thebibliography}{10}

\bibitem{alstrup05maintaining}
S.~Alstrup, J.~Holm, K.~de~Lichtenberg, and M.~Thorup.
\newblock Maintaining information in fully dynamic trees with top trees.
\newblock {\em {ACM} Trans. Algorithms}, 1(2):243--264, 2005.

\bibitem{banerjee19fully}
N.~Banerjee, V.~Raman, and S.~Saurabh.
\newblock Fully dynamic arboricity maintenance.
\newblock In {\em COCOON}, pages 1--12, 2019.

\bibitem{barba19dynamic}
L.~Barba, J.~Cardinal, M.~Korman, S.~Langerman, A.~van Renssen, M.~Roeloffzen,
  and S.~Verdonschot.
\newblock Dynamic graph coloring.
\newblock {\em Algorithmica}, 81(4):1319--1341, 2019.

\bibitem{barenboim17fully}
L.~Barenboim and T.~Maimon.
\newblock Fully-dynamic graph algorithms with sublinear time inspired by
  distributed computing.
\newblock In {\em ICCS}, pages 89--98, 2017.

\bibitem{berglin20simple}
E.~Berglin and G.~S. Brodal.
\newblock A simple greedy algorithm for dynamic graph orientation.
\newblock {\em Algorithmica}, 82(2):245--259, 2020.

\bibitem{bernstein15fully}
A.~Bernstein and C.~Stein.
\newblock Fully dynamic matching in bipartite graphs.
\newblock In {\em ICALP}, pages 167--179, 2015.

\bibitem{bernstein16faster}
A.~Bernstein and C.~Stein.
\newblock Faster fully dynamic matchings with small approximation ratios.
\newblock In {\em SODA}, pages 692--711, 2016.

\bibitem{bhattacharya18dynamic}
S.~Bhattacharya, D.~Chakrabarty, M.~Henzinger, and D.~Nanongkai.
\newblock Dynamic algorithms for graph coloring.
\newblock In {\em SODA}, pages 1--20, 2018.

\bibitem{bhattacharya19fully}
S.~Bhattacharya, F.~Grandoni, J.~Kulkarni, Q.~C. Liu, and S.~Solomon.
\newblock {Fully Dynamic $(\Delta+1)$-Coloring in Constant Update Time}.
\newblock {\em CoRR}, abs/1910.02063, 2019.

\bibitem{bhattacharya15space}
S.~Bhattacharya, M.~Henzinger, D.~Nanongkai, and C.~E. Tsourakakis.
\newblock Space- and time-efficient algorithm for maintaining dense subgraphs
  on one-pass dynamic streams.
\newblock In {\em STOC}, pages 173--182, 2015.

\bibitem{brodal99dynamic}
G.~S. Brodal and R.~Fagerberg.
\newblock Dynamic representation of sparse graphs.
\newblock In {\em WADS}, pages 342--351, 1999.

\bibitem{duan19dynamic}
R.~Duan, H.~He, and T.~Zhang.
\newblock Dynamic edge coloring with improved approximation.
\newblock In {\em SODA}, pages 1937--1945, 2019.

\bibitem{frigioni03fully}
D.~Frigioni, A.~Marchetti{-}Spaccamela, and U.~Nanni.
\newblock Fully dynamic shortest paths in digraphs with arbitrary arc weights.
\newblock {\em J. Algorithms}, 49(1):86--113, 2003.

\bibitem{ghaffari18improved}
M.~Ghaffari, J.~Hirvonen, F.~Kuhn, and Y.~Maus.
\newblock Improved distributed delta-coloring.
\newblock In {\em PODC}, pages 427--436, 2018.

\bibitem{ghaffari17distributed}
M.~Ghaffari and H.~Su.
\newblock Distributed degree splitting, edge coloring, and orientations.
\newblock In {\em SODA}, pages 2505--2523, 2017.

\bibitem{goldberg88parallel}
A.~V. Goldberg, S.~A. Plotkin, and G.~E. Shannon.
\newblock Parallel symmetry-breaking in sparse graphs.
\newblock {\em {SIAM} J. Discrete Math.}, 1(4):434--446, 1988.

\bibitem{hardy18tackling}
B.~Hardy, R.~Lewis, and J.~M. Thompson.
\newblock Tackling the edge dynamic graph colouring problem with and without
  future adjacency information.
\newblock {\em J. Heuristics}, 24(3):321--343, 2018.

\bibitem{he14orienting}
M.~He, G.~Tang, and N.~Zeh.
\newblock Orienting dynamic graphs, with applications to maximal matchings and
  adjacency queries.
\newblock In {\em ISAAC}, pages 128--140, 2014.

\bibitem{henzinger19constant}
M.~Henzinger and P.~Peng.
\newblock {Constant-Time Dynamic $(\Delta+1)$-Coloring and Weight Approximation
  for Minimum Spanning Forest: Dynamic Algorithms Meet Property Testing}.
\newblock {\em CoRR}, abs/1907.04745, 2019.

\bibitem{kaplan18dynamic}
H.~Kaplan and S.~Solomon.
\newblock Dynamic representations of sparse distributed networks: {A}
  locality-sensitive approach.
\newblock In {\em SPAA}, pages 33--42, 2018.

\bibitem{khot06better}
S.~Khot and A.~K. Ponnuswami.
\newblock Better inapproximability results for maxclique, chromatic number and
  min-3lin-deletion.
\newblock In {\em ICALP}, pages 226--237, 2006.

\bibitem{kopelowitz14orienting}
T.~Kopelowitz, R.~Krauthgamer, E.~Porat, and S.~Solomon.
\newblock Orienting fully dynamic graphs with worst-case time bounds.
\newblock In {\em ICALP}, pages 532--543, 2014.

\bibitem{kowalik06oracles}
L.~Kowalik and M.~Kurowski.
\newblock Oracles for bounded-length shortest paths in planar graphs.
\newblock {\em {ACM} Trans. Algorithms}, 2(3):335--363, 2006.

\bibitem{linial87distributive}
N.~Linial.
\newblock Distributive graph algorithms-global solutions from local data.
\newblock In {\em FOCS}, pages 331--335, 1987.

\bibitem{linial92locality}
N.~Linial.
\newblock Locality in distributed graph algorithms.
\newblock {\em {SIAM} J. Comput.}, 21(1):193--201, 1992.

\bibitem{monical14static}
C.~Monical and F.~Stonedahl.
\newblock Static vs. dynamic populations in genetic algorithms for coloring a
  dynamic graph.
\newblock In {\em GECCO}, pages 469--476, 2014.

\bibitem{neiman16simple}
O.~Neiman and S.~Solomon.
\newblock Simple deterministic algorithms for fully dynamic maximal matching.
\newblock {\em {ACM} Trans. Algorithms}, 12(1):7:1--7:15, 2016.

\bibitem{onak18fully}
K.~Onak, B.~Schieber, S.~Solomon, and N.~Wein.
\newblock Fully dynamic {MIS} in uniformly sparse graphs.
\newblock In {\em ICALP}, pages 92:1--92:14, 2018.

\bibitem{parter16local}
M.~Parter, D.~Peleg, and S.~Solomon.
\newblock Local-on-average distributed tasks.
\newblock In {\em SODA}, pages 220--239, 2016.

\bibitem{solomon18improved}
S.~Solomon and N.~Wein.
\newblock Improved dynamic graph coloring.
\newblock In {\em ESA}, pages 72:1--72:16, 2018.

\bibitem{yuan17effective}
L.~Yuan, L.~Qin, X.~Lin, L.~Chang, and W.~Zhang.
\newblock Effective and efficient dynamic graph coloring.
\newblock {\em {PVLDB}}, 11(3):338--351, 2017.

\bibitem{zuckerman09linear}
D.~Zuckerman.
\newblock Linear degree extractors and the inapproximability of max clique and
  chromatic number.
\newblock {\em Theory of Computing}, 3(1):103--128, 2007.

\end{thebibliography}

\appendix
\section{Further Related Work}
\label{sec:related-work}
	The first result for dynamic coloring was obtained by Barenboim and
	Maimon~\cite{barenboim17fully} and they showed how to maintain a
	$O(\Delta)$-coloring with worst-case update time $O(\sqrt{\Delta} \poly(\lg n))$. 
	This result was later improved by the
	algorithms~\cite{bhattacharya15space,bhattacharya18dynamic,henzinger19constant}
	to obtain $(\Delta+1)$-colorings with amortized constant update time.
	Duan et al.~\cite{duan19dynamic} provided an algorithm for
	$(1+\varepsilon)\Delta$-edge-coloring with polylogarithmic update time if
	$\Delta\geq\Omega((\lg n/\varepsilon)^2)$.
	Furthermore, algorithms for dynamic coloring were also studied in practice,
	e.g.,~\cite{monical14static,yuan17effective,hardy18tackling}.

	Computing graph colorings of static graphs has been an active research area
	in the distributed community over several decades,
	e.g.,~\cite{linial87distributive,linial92locality,goldberg88parallel,ghaffari17distributed,ghaffari18improved}.
	More recently, Parter et al.~\cite{parter16local} also studied dynamic
	coloring algorithms in the distributed setting.

	Providing dynamic algorithms for graphs with bounded arboricity has been a
	fruitful area of research. Such algorithms have been derived for
	fundamental dynamic problems including shortest
	paths~\cite{frigioni03fully,kowalik06oracles}, maximal independent
	set~\cite{onak18fully},
	matching~\cite{bernstein15fully,bernstein16faster,neiman16simple} or
	coloring~\cite{solomon18improved}.

	Several papers studied the problem of dynamically maintaining low-outdegree
	edge orientation. The first such result was obtained by Brodal and
	Fagerberg~\cite{brodal99dynamic} who obtained an $O(\alpha)$-orientation
	with amortized update time $O(\alpha + \lg n)$.  He et
	al.~\cite{he14orienting} obtained a tradeoff between the outdegree and the
	update time of the algorithm.  Kopelowitz et
	al.~\cite{kopelowitz14orienting} obtained algorithms with worst-case update
	time and this result was improved by Berglin and
	Brodal~\cite{berglin20simple}.  Kaplan and Solomon~\cite{kaplan18dynamic}
	showed how to maintaining edge orientations in the distributed setting when
	the local memory per node is restricted.

\section{Omitted Proofs}
\label{sec:omitted}

\subsection{Proof of Lemma~\ref{lem:level-data-structure}}
\label{sec:proof-level-data-structure}
	Before we prove the lemma, let us first review the data structure by
	Bhattacharya et al.~\cite{bhattacharya15space}. Since the data structure
	of~\cite{bhattacharya15space} was developed for the densest subgraph
	problem, let us first introduce this problem and discuss its relationship with
	arboricity.

	\textbf{Arboricity and densest subgraph.}
	The \emph{density of the densest subgraph} is defined as
	$d^* = \max_{S \subseteq V} \frac{|E(S)|}{|S|}$.
	By the Nash-Williams Theorem, we have that for the arboricity of a graph it
	holds that
	$\alpha = \max_{S \subseteq V} \left\lceil \frac{|E(S)|}{|S|-1} \right\rceil$.
	Thus, we get
	\begin{align}
	\label{eq:arboricity-density}
		\alpha
		= \max_{S \subseteq V} \left\lceil \frac{|E(S)|}{|S|-1} \right\rceil
		\geq \max_{S \subseteq V} \frac{|E(S)|}{|S|-1}
		\geq \max_{S \subseteq V} \frac{|E(S)|}{|S|}
		= d^*.
	\end{align}
	
	\textbf{The data structure of~\cite{bhattacharya15space}.}
	Recall from Section~\ref{sec:level-data-structure} that our data structure
	maintains $\lceil \lg n\rceil \cdot L$ levels. Furthermore, there are
	$\lceil \lg n \rceil$ groups $G_\ell$ consisting of $L$ consecutive levels
	each. Each vertex $v$ at level $i\in G_\ell$ satisfies the following two
	invariants: (1)~$v$ at has at most $5 \cdot 2^\ell$ neighbors in
	$\bigcup_{j\geq i} V_j$ and (2)~$v$ has at least $2^{\ell'}$ neighbors in
	levels $\bigcup_{j\geq i-1} V_j$, where $\ell'$ is such that $i-1\in
	G+{\ell'}$.

	Now the data structure of~\cite{bhattacharya15space} essentially works by
	running $\lceil\lg n\rceil$ data structures $\D_\ell$ in parallel, one data
	structure for each group $G_\ell$. More precisely, each data structure
	$\D_\ell$ stores all vertices and edges of the graph. Furthermore, $D_\ell$
	assigns each vertex
	to exactly one of $L$ levels $U^\ell_1,\dots,U^\ell_L$ and the data structure
	ensures that the following invariants hold: (1)~For each $v\in V$ at level
	$i<L$, $v$ has at most $5 \cdot 2^\ell$ neighbors in $\bigcup_{j\geq i} U^\ell_j$
	and (2)~for each $v\in V$ at level $i>1$, $v$ has at least $2^{\ell}$
	neighbors in levels $\bigcup_{j\geq i-1}U^\ell_j$.  The update procedure of the
	algorithm is the same as described in
	Section~\ref{sec:level-data-structure}, it only takes into account that now
	there are only $L$ levels per data structure $\D_\ell$.
	
	Note that in the data structure of~\cite{bhattacharya15space} there exist
	$\lceil\lg n\rceil$ copies of each vertex $v\in V$, while in our data
	structure each vertex is only stored once.
	
	The data structure by~\cite{bhattacharya15space} satisfies the following
	properties.
	\begin{lem}[{Bhattcharya et al.~\cite{bhattacharya15space}}]
	\label{lem:bhattacharya15space}
		The data structure $\D_\ell$ satisfies the following properties:
		\begin{enumerate}
			\item If $\ell > \lg(4d^*)$, then the highest level of $\D_\ell$ does not
				contain any vertices, i.e., $U^{\ell}_L=\emptyset$.
			\item The amortized update time for maintaining $G_\ell$ is $O(L) = O(\lg n)$.
		\end{enumerate}
	\end{lem}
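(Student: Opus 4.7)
The plan is to prove the two items separately. Item~1 reduces to a short counting argument exploiting invariant~2, while item~2 requires an amortized analysis via a potential function.

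For item~1, I would argue by contradiction: assume $U^\ell_L \neq \emptyset$ and set $S_i := \bigcup_{j \geq i} U^\ell_j$ for $i = 1, \ldots, L$. Every vertex in $S_i \setminus U^\ell_i$ lies at some level $j \geq i+1 > 1$, so invariant~2 guarantees it has at least $2^\ell$ neighbors in $S_{j-1} \subseteq S_i$. Summing these contributions and dividing by~2 (each edge is counted at most twice) yields $|E(S_i)| \geq 2^{\ell-1}\,|S_i \setminus U^\ell_i|$. I then split into two cases. If there exists an $i$ with $|U^\ell_i| \leq |S_{i+1}|$, then $|S_i \setminus U^\ell_i| = |S_{i+1}| \geq |S_i|/2$, and we get the density bound $|E(S_i)|/|S_i| \geq 2^{\ell-2} > d^*$ (using $\ell > \lg(4 d^*)$), contradicting the definition of $d^*$. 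Otherwise $|U^\ell_i| > |S_{i+1}|$ for every $i$, hence $|S_i| > 2\,|S_{i+1}|$; iterating gives $|S_1| > 2^{L-1}\,|S_L| \geq 2^{L-1} \geq 2n$ (since $L = 2 + \lceil \lg n \rceil$ implies $2^{L-1} \geq 2n$), contradicting $|S_1| \leq n$.

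For item~2, I would use a standard potential-function argument tailored to the factor-$5$ slack between invariants~1 and~2. The plan is to define $\Phi = \sum_v \phi(v)$, where $\phi(v)$ depends on $v$'s current level $i$ and on how far the number of neighbors of $v$ at levels $\geq i-1$ and $\geq i$ is from the thresholds $2^\ell$ and $5\cdot 2^\ell$. I would then verify that (a)~a single edge insertion or deletion changes $\Phi$ by at most $O(L)$, and (b)~whenever the algorithm moves a vertex at level $i\in G_\ell$ up or down, the $O(2^\ell)$ cost of rebuilding the $\friends$-lists is absorbed by a corresponding drop of at least that amount in $\Phi$. The crucial observation for~(b) is that immediately before the move, the vertex's actual neighbor count has to cross the gap of $(5-1)\cdot 2^\ell = 4\cdot 2^\ell$ between the two thresholds, and those $\Omega(2^\ell)$ ``excess'' neighbors can be converted into enough potential to pay for the move.

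The main obstacle will be designing the potential so that cascading level changes are handled cleanly: when a vertex moves, several of its neighbors may also become dirty and move in turn. I would address this by making $\phi(v)$ depend only on $v$'s own level and the counts of its neighbors in a few specific levels, so that one neighbor crossing a level affects only $O(1)$ terms in $\Phi$; then with suitably chosen coefficients the net per-move change in $\Phi$ is still a decrease of the required magnitude. Once this accounting is set up, summing over all updates and using that there are $L = O(\lg n)$ levels yields the $O(L)$ amortized bound. Item~1, by contrast, is a clean density argument and should present no technical difficulty.
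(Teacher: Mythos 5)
The paper does not actually prove this lemma: it is imported wholesale from Bhattacharya et al.\ (Theorems~2.6 and~4.2 of the cited work), so any self-contained argument you give is doing more than the paper does. Your proof of item~1 is complete and correct: the counting via invariant~2 gives $|E(S_i)|\ge 2^{\ell-1}|S_{i+1}|$, the case split on whether some $|U^\ell_i|\le|S_{i+1}|$ is exhaustive, Case~A yields $d^*\ge |E(S_i)|/|S_i|\ge 2^{\ell-2}>d^*$, and Case~B yields $|S_1|>2^{L-1}\ge 2n$ since $L=2+\lceil\lg n\rceil$. (You correctly use the uniform threshold $2^\ell$ of the single structure $\D_\ell$, not the group-dependent threshold of the combined hierarchy, and the nonemptiness of every $S_i$ follows from the assumption $U^\ell_L\ne\emptyset$.) This is essentially the peeling/density argument behind Theorem~2.6 of the cited paper.

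Item~2, however, is a plan rather than a proof. The potential function $\Phi$ is never written down, and the two properties you list --- that an edge update changes $\Phi$ by $O(L)$, and that every level change of a vertex is paid for by a drop in $\Phi$ of order $2^\ell$ --- are precisely the content of the amortized analysis, not consequences of it. The difficulty you flag yourself is the real one: when $v$ moves between levels $i$ and $i\pm1$, the invariant-relevant neighbor counts of up to $\Theta(2^\ell)$ neighbors (those at levels $i$ and $i+1$) each change by one, so $\Phi$ can increase by $\Theta(2^\ell)$ times the per-unit sensitivity of $\phi$; for the move to still cause a net decrease of $\Omega(2^\ell)$ you must choose the coefficients of $\phi$ so that $v$'s own potential drops by strictly more, and this is exactly where the factor-$5$ slack and the normalization by $2^\ell$ enter. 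Saying that these excess neighbors ``can be converted into enough potential'' asserts the conclusion. Until you exhibit a concrete $\phi$ (e.g., one term proportional to the deficit below the invariant-2 threshold at $v$'s level, summed over levels, plus a term at each endpoint of each edge charging for future level crossings) and verify both directions of the accounting, item~2 is unproven. If you do not want to reconstruct that analysis, the honest move is the paper's: cite Theorem~4.2 of Bhattacharya et al.\ for the update time and prove only item~1 yourself.
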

	The lemma follows from Theorem~2.6 and Theorem~4.2
	in~\cite{bhattacharya15space}.

	Note that since the data structure of~\cite{bhattacharya15space} maintains
	$O(\lg n)$ data structures $\D_\ell$ in parallel, the total update time of
	the data structure becomes $O(\lg^2 n)$.

	\begin{proof}[Proof of Lemma~\ref{lem:level-data-structure}]
		Let us now prove Lemma~\ref{lem:level-data-structure}.

		Property~\ref{enu:edges-upwards} follows immediately from how we defined
		the edge orientation in Section~\ref{sec:level-data-structure}.

		The claim about the update time in Property~\ref{enu:update-time}
		follows from the analysis of the data structures $\D_\ell$
		in~\cite{bhattacharya15space} (the analysis goes through if we assign
		$\Theta(\lceil\lg n\rceil \cdot L)=\Theta(\lg^2 n)$ potential to each edge
		insertion and deletion); the claim for the number of edge flips follows
		from the fact that with amortized update time $O(\lg^2 n)$ the data
		structure cannot flip more than $O(\lg^2 n)$ edges per update
		(amortized).

		Property~\ref{item:approx-arboricity} follows from the fact
		that~\cite{bhattacharya15space} show that a $5$-approximation
		of the densest subgraph can be maintained with amortized update time $O(\lg^2 n)$. Since
		the value of the arboricity and the densest subgraph only differ by a
		factor $2$, we can simply run the data structure of
		\cite{bhattacharya15space} in the background to always have access to
		a value $\alpha^*$ with the desired property.

		Property~\ref{enu:orientation} follows from
		Property~\ref{enu:empty-groups} (which we prove below): By
		Property~\ref{enu:empty-groups} we have that $V_i=\emptyset$ for all
		levels $i$ with $i\in G_\ell$ and $\ell>\ell^*=\lceil\lg(4\alpha)\rceil$.
		Thus, all vertices with out-edges must be in a level $i$ with $i\in
		G_\ell$ with $\ell\leq\lceil\lg(4\alpha)\rceil$. By the invariants
		maintained by the data structure, we obtain that the outdegree of each
		such vertex is at most
		\begin{align*}
			5 \cdot 2^\ell
			= 5 \cdot 2^{\lg(4\alpha)+1}
			= O(\alpha).
		\end{align*}

		We are left to prove Property~\ref{enu:empty-groups} and proceed in two steps.

		First, recall that the levels in the levels data structure from
		Section~\ref{sec:level-data-structure} were denoted
		$V_i$ and those in the data structures $\D_\ell$
		from~\cite{bhattacharya15space} were denoted
		$U^{\ell}_i$. We prove that for all $i\geq 1$ it holds that
		\begin{align}
		\label{eq:claim}
			\bigcup_{j\geq i} V_j 
			\subseteq 
			\bigcup_{j=i'}^{L} U^{\ell}_j,
		\end{align}
		where $\ell \in \{0,\dots,\lceil\lg n\rceil-1\}$ and $i'\in\{1,\dots,L\}$
		are such that $i=\ell \cdot L + i'$.

		Indeed, suppose that $i'=1$. Then we have that $\bigcup_{j=1}^{L} U^{\ell}_j = V$
		since $\D_\ell$ only has $L$ levels and stores all vertices in $V$.
		Thus, the desired subset relationship trivially holds.

		Next, consider $1<i'\leq L$. Observe that by induction hypothesis we have that
		\begin{align*}
			X:= \bigcup_{j\geq i-1} V_j 
			\subseteq 
			\bigcup_{j=i'-1}^{L} U^{\ell}_j =: Y.
		\end{align*}
		This implies for all $v\in V$ it holds that $d_X(v)\leq d_Y(v)$, where
		$d_X(v)$ and $d_Y(v)$ denote the degree of $v$ induced by the vertices
		in $X$ and $Y$, respectively. Since both the level data structure and
		the data structure $\D_\ell$ only promote vertices with at least
		$5\cdot 2^\ell$ vertices to the next level, any vertex which is promoted
		from level $i-1$ to $i$ in the level data structure must also be
		promoted from level $i'-1$ to level $i'$ in the data structure
		$\D_\ell$. Thus, the claim from Equation~\eqref{eq:claim} holds.

		Second, the first property of
		Lemma~\ref{lem:bhattacharya15space} implies that $U^{\ell}_L=\emptyset$
		for $\ell>\lg(4d^*)$.
		By assumption of Property~\ref{enu:empty-groups} and
		Equation~\eqref{eq:arboricity-density} we have that
		\begin{align*}
			\ell
			> \ell^*
			= \lceil \lg(4\alpha) \rceil
			\geq \lg(4 d^*).
		\end{align*}
		Together with the claim above we get that for $i=(\ell^*+1) \cdot L$,
		\begin{align*}
			\bigcup_{j\geq i} V_j
			\subseteq
			U^{\ell^*}_L
			= \emptyset.
		\end{align*}
		Since all levels $i\in G_\ell$ with $\ell > \ell^*$ satisfy
		$i>(\ell^*+1) \cdot L$, we obtain that $V_i=\emptyset$.
		This implies Property~\ref{enu:empty-groups}.
	\end{proof}

\subsection{Proof of Lemma~\ref{lem:top-trees-corollary}}
\label{sec:proof-top-trees-corollary}
	To prove Lemma~\ref{lem:top-trees-corollary} let us first state an application
	of the top trees data structure by Alstrup et al.~\cite{alstrup05maintaining}.
	\begin{lem}[{Alstrup et al.~\cite[Theorem~2.7]{alstrup05maintaining}}]
	\label{lem:top-trees}
		There exists a data structure maintaining a dynamic forest which
		offers the following operations in $O(\lg n)$ time:
		\begin{enumerate}
			\item $\link( \{u,v\} )$, where $u$ and $v$ are in different trees: Insert the
				edge $\{u,v\}$ into the dynamic forest.
			\item $\cut( \{u,v\} )$: Remove the edge $\{u,v\}$ from the dynamic
				forest.
			\item $\markOP(u)$: Mark the vertex $u$.
			\item $\unmarkOP(u)$: Unmark the vertex $u$.
			\item $\nearestMarkedNeighbor(u)$: Return the distance of $u$ to its
			nearest marked neighbor and report the nearest marked neighbor.
		\end{enumerate}
	\end{lem}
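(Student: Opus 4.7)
The plan is to derive this lemma as a direct application of the top trees framework of Alstrup et al.~\cite{alstrup05maintaining}. Top trees provide a balanced hierarchical decomposition of a dynamic forest into \emph{clusters} of depth $O(\lg n)$, where each cluster has at most two boundary vertices and compound clusters are formed by merging two sub-clusters sharing a common boundary vertex. The framework supports $\link$ and $\cut$ in $O(\lg n)$ time while letting the user attach an arbitrary attribute to each cluster, provided one supplies an $O(1)$-time \emph{combine} function that derives the attribute of a merged cluster from the attributes of its two children. The first two bullets of the lemma ($\link$ and $\cut$) are then native top tree operations and require no extra work.

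To support $\markOP$, $\unmarkOP$, and $\nearestMarkedNeighbor$, I would augment each cluster $C$ with the following attribute: for each boundary vertex $b$ of $C$, the distance from $b$ to the nearest marked vertex inside $C$ together with a pointer to that vertex (or $\infty$ if no marked vertex is present). The base clusters are single edges, so their attributes can be read off in $O(1)$ from the mark bits at the two endpoints. For a compound cluster obtained by merging sub-clusters $C_1$ and $C_2$ along a shared boundary vertex $v$, the combine function computes, for each boundary $b$ of the merged cluster (which lies in one of the $C_i$), the minimum of (i) the distance from $b$ to the nearest marked vertex in the sub-cluster containing $b$, and (ii) the distance from $b$ to $v$ (which is stored in that sub-cluster) plus the distance from $v$ to the nearest marked vertex in the other sub-cluster (which is stored there). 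This is an $O(1)$ computation.

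Given this attribute, $\markOP(u)$ and $\unmarkOP(u)$ simply toggle the mark bit at $u$ and trigger re-combinations along the $O(\lg n)$ clusters on the path from $u$'s leaf cluster up to the root, for a total of $O(\lg n)$ time. A query $\nearestMarkedNeighbor(u)$ can be answered by walking up from the leaf containing $u$; at each level, the stored per-boundary information at the cluster covering $u$ gives the current best answer, and after $O(\lg n)$ steps the nearest marked vertex in the whole tree containing $u$ is identified, again in $O(\lg n)$ time.

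The only real obstacle is checking that "nearest marked vertex from each boundary" is genuinely composable in $O(1)$, i.e., that the combine step never needs to peek inside a sub-cluster beyond the two stored boundary-distance pairs. This holds because any path in a tree from a boundary $b$ of the merged cluster to a vertex $x$ inside it either stays within the sub-cluster containing $b$, or crosses the shared boundary $v$ exactly once; both alternatives are captured by the minimum in the combine rule. Once this composability is in hand, the lemma follows from the general top tree machinery with the stated $O(\lg n)$ bounds.
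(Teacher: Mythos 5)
The paper does not actually prove this lemma---it is imported verbatim as Theorem~2.7 of Alstrup et al.---so there is no in-paper argument to compare against; your sketch is a faithful reconstruction of the standard top-tree derivation (augment each cluster with, for each boundary vertex, the distance to the nearest marked vertex inside the cluster, together with the cluster-path length between boundaries, all combinable in $O(1)$), and the composability argument you give is the right one. The only point you gloss over is that $u$ need not remain a boundary vertex of the clusters above its leaf clusters, so $\markOP(u)$, $\unmarkOP(u)$ and $\nearestMarkedNeighbor(u)$ are cleanest implemented via the standard Expose$(u)$ operation (itself $O(\lg n)$) rather than a literal walk up from a leaf; this is a cosmetic implementation detail, not a gap.
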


	Now we show how to implement the data structure claimed in
	Lemma~\ref{lem:top-trees-corollary}.  For inserting edges we use the
	$\link$-operation and for deleting edges we use the $\cut$-operation from
	Lemma~\ref{lem:top-trees}. We only need to argue how the roots of the trees
	are picked and how to distances to the roots can be computed.
	
	In our construction, we follow the convention that marked vertices (as per
	Lemma~\ref{lem:top-trees}) will correspond to the roots of the trees of
	Lemma~\ref{lem:top-trees-corollary}. We make sure that each tree in the
	dynamic forest contains exactly one marked vertex.

	When we initialize the data structure, we build the top tree data structure
	for a graph with $n$ vertices and without any edges. Furthermore, we mark all
	vertices in the graph using the $\markOP(\cdot)$ operation. Note that
	initially all trees have exactly one root because all connected components
	are isolated vertices and all vertices are marked (and, hence, roots).

	When an edge $\{u,v\}$ is inserted, we proceed as follows. We use the
	$\nearestMarkedNeighbor(u)$ routine to obtain the nearest marked neighbor of
	$u$ and unmark this vertex. Now we use $\link(\{u,v\})$ to link the trees
	of $u$ and $v$ in the top trees data structure. Observe that the resulting
	tree only contains a single marked vertex.

	When an edge $\{u,v\}$ is deleted, we use $\cut(\{u,v\})$ to remove the edge
	$\{u,v\}$ from the top tree data structure. Note that now $u$ and $v$ are
	different trees. Now we query $\nearestMarkedNeighbor(u)$ and
	$\nearestMarkedNeighbor(v)$. Observe that exactly one of them will not have
	a marked neighbor in their new tree.  Suppose this vertex is $u$. Then we
	perform $\markOP(u)$ in the top trees data structure. This implies that each
	tree has a unique root after the operation finished.

	When the data structure is queried for the distance of a vertex $u$ to its
	root, we simply return the distance to $\nearestMarkedNeighbor(u)$.
\end{document}